\def\sd{\mathsf{d}}
\def\sV{\mathsf{V}}
\def\sv{\mathsf{v}}
\def\sT{\mathsf{T}}
\def\sP{\mathsf{P}}
\def\R{\mathbb R}
\def\be{\begin{equation}}
\def\ee{\end{equation}}
\newtheorem{prop}{Proposition}
\newtheorem{definition}{Definition}
\newtheorem{corollary}{Corollary}
\newtheorem{example}{Example}
\begin{document}

\title{Geometry of Routh reduction\thanks{Research founded by the  Polish National Science Centre grant under the contract number DEC-2012/06/A/ST1/00256.}}
\author{Katarzyna Grabowska and Pawe\l\ Urba\'nski\\
Faculty of Physics, University of  Warsaw\\
ul. Pasteura 5, 02-093 Warszawa, Poland}

\maketitle

\begin{abstract}
Routh reduction for Lagrangian systems with cyclic variable is presented as an example of Lagrangian reduction. It appears that Routhian, which is a generating object of reduced dynamics, is not a function any more but a section of a bundle of affine values.
\end{abstract}

\section{Introduction}\label{sec:intro}

The Routh reduction \cite{R} is a classical piece of analytical mechanics. It concerns systems with a special type of symmetry, namely systems  having cyclic variables. Usually Routh reduction is presented in terms of coordinates and a cyclic variable is a coordinate $q$ such that $\frac{\partial L}{\partial q} =0$. There where several attempts to geometrize Routh reductions (see e.g. \cite{M} and \cite{A}), but only partially successful. Most frequent simplifying assumption is that the configuration manifold is a product of two manifolds: with cyclic and non-cyclic variables. In the paper we present a complete geometric framework for Routh reduction using the language of symplectic reductions and their generating objects. We show that Routhian which is the resulting generating object for the reduced dynamics is not a function, but a section of an affine bundle (the bundle of affine values). Such situation we have already encountered in the dynamics of a charged particles, frame independent formulation of Newtonian analytical mechanics and in analytical mechanics of non-autonomous systems (see \cite{TU,U}).

In section \ref{sec:not} we briefly recall geometrical tools we use in the following sections. In particular, in section \ref{sub:affine} we review the rudiments of the {\it av}-geometry, which are essential for a geometric approach to Routh reduction. We follow here \cite{GGU1,GGU2,U}. Since we work in the Tulczyjew approach to analytical mechanics and Legendre transformation \cite{T1,T2,T3,T4} we need the theory of double vector and double affine bundles as well \cite{KU,GR,GRU}.

Section \ref{sec:sym} contains main results of our work. The Routh reduction relation is first obtained as a composition of the Legendre transformation from Lagrangian to Hamiltonian side, a symplectic reduction on the Hamiltonian side, and the affine Legendre transformation from  Hamiltonian  to Lagrangian side. We show then how the reduction can be performed on the Lagrangian side only. Routh reduction is a very instructive example of a Lagrangian reduction, which illustrates the principal differences between Lagrangian and Hamiltonian reductions. The Lagrangian reduction relation, unlike the Hamiltonian one, involves values of generating objects (Lagrangians) of systems.

Section \ref{sec:gen} contains few remarks about possible generalisations of Routh reduction setting.

\section{Notation and preliminaries}\label{sec:not}

Let $Q$ be a smooth differential manifold with coordinates $(q^i)$ in an open subset $\mathcal{O}\subset Q$. In the total space of the {\it tangent bundle} $\tau_Q:\sT Q\rightarrow Q$ we will use adopted coordinates $(q^i, \dot q^k)$ defined in $\tau_Q^{-1}(\mathcal{O})$. Similarly, in the total space of the {\it cotangent bundle} $\pi_Q:\sT^\ast Q\rightarrow Q$ adopted coordinates $(q^i,  p_k)$ are defined in $\pi_Q^{-1}(\mathcal{O})$. Let us recall the structure of tangent and cotangent bundles, since it will be very much used in the following sections.

Tangent and cotangent bundles are pair of dual vector bundles, therefore we expect that any structure compatible with the vector bundle structure has its counterpart on the dual bundle. For example, it is well known that the tangent bundle is a canonical example of a {\it Lie algebroid}. It can be given by the Lie bracket of vector fields together with the identity map as an anchor. The dual counterpart of the Lie algebroid structure on $\sT Q$ is the canonical linear {\it symplectic form} $\omega_Q$ on $\sT^\ast Q$ or, equivalently, the canonical {\it Poisson bivector field} $\Lambda_Q$ on $\sT^\ast Q$. In coordinates, these two objects read
\be\label{not:6}
\omega_Q=\sd p_i\wedge\sd q^i, \qquad \Lambda_Q=\partial_{p_k}\wedge\partial_{q^k}.
\ee
Since we are going to use Tulczyjew approach in mechanics, it will be convenient to encode Lie algebroid structure of $\sT Q$ and symplectic structure on $\sT^\ast Q$ in appropriate morphisms of double vector bundles.

Recall that a {\it double vector bundle} is a manifold equipped with two compatible vector bundle structures  \cite{P,KU}. The compatibility condition can be expressed conveniently as commuting of the two homogeneity structures associated to multiplying vectors by real numbers \cite{GR}. Morphisms of double vector bundles are, naturally, smooth maps linear with respect to both vector bundle structures. Canonical examples of double vector bundles are tangent and cotangent bundles to vector bundles, in particular iterated tangent and cotangent bundles $\sT\sT Q$, $\sT\sT^\ast Q$, $\sT^\ast\sT^\ast Q$ and $\sT^\ast\sT Q$. Since for any vector bundle $E\rightarrow M$ the double vector bundles $\sT^\ast E$ and $\sT^\ast E^\ast$ are canonically isomorphic \cite{D}, so are $\sT^\ast\sT^\ast Q$ and $\sT^\ast\sT Q$. Symplectic structure of $\sT^\ast Q$ can be alternatively given by a map
\be\label{not:7}
\beta_Q:\sT\sT^\ast Q\longrightarrow \sT^\ast\sT^\ast Q, \quad
\beta_Q(v)=\omega_Q(v,\cdot).
\ee
Using the adopted coordinates $(q^i, p_j, \dot q^k, \dot p_l)$ in $\sT\sT^\ast Q$ and $(q^i, p_j, \xi_k, y^l)$ in $\sT^\ast\sT^\ast Q$ we can write
\be\label{not:8}
(q^i, p_j, \xi_k, y^l)\circ\beta_Q=(q^i, p_j, \dot p_k,-\dot q^l).
\ee
Since $\sT^\ast\sT^\ast Q$ is a cotangent bundle itself, it carries a canonical symplectic form
$$\omega_{\sT^\ast Q}=\sd\xi_k\wedge \sd q^k+\sd y^j\wedge \sd p_j.$$
Moreover $\sT\sT^\ast Q$ is also a symplectic manifold with tangent lift of $\omega_Q$, i.e.
$$\sd_{\sT}\omega_Q=\sd p_i\wedge\sd \dot q^i+\sd\dot p_j\wedge\sd q^j.$$
It is easy to check that $\beta_Q$ is a symplectomorphism with respect to these structures.

An algebroid structure on $E\rightarrow M$ can be given as a double vector bundle morphism from $\sT^\ast E$ to $\sT E^\ast$ covering the identity of $E^\ast$ \cite{GU}. For $E=\sT Q$ we usually write it in opposite direction, which is possible, since the appropriate double vector bundle morphism is a diffeomorphism. The Lie algebroid structure of $\sT Q$ is then encoded in the Tulczyjew isomorphism $\alpha_Q$
\be\label{not:9}
\alpha_Q:\sT\sT^\ast Q\longrightarrow \sT^\ast\sT Q.
\ee
Using the adopted coordinates $(q^i, \dot q^j, \varphi_k, \psi_j )$ in $\sT^\ast\sT Q$  we can write
\be\label{not:10}
(q^i, \dot q^j, \varphi_k, \psi_l)\circ\alpha_Q=(q^i, \dot q^j, \dot p_k, p_j).
\ee
In the literature (e.g. in \cite{T3, T4}) $\alpha_Q$ is usually defined as the dual of the canonical flip $\kappa_Q$
\be\label{not:11}
\kappa_Q:\sT\sT Q\longrightarrow \sT\sT Q,\ee
that in coordinates reads
\be
\label{not:12} (q^i, \dot q^j, \delta q^k, \delta\dot q^l)\circ\kappa_Q=(q^i, \delta q^j, \dot q^k,\delta\dot x^l).
\ee
To define $\alpha_Q$ as a dual of $\kappa_Q$ we have to state which vector bundle structure in $\sT\sT Q$ we use. In the source $\sT\sT Q$ we use $\tau_{\sT Q}$ with the obvious dual bundle being $\pi_{\sT Q}:\sT^\ast\sT Q\rightarrow \sT Q$, while in the target $\sT\sT Q$ we use the second vector bundle structure over $\sT Q$, namely $\sT\tau_Q$. The dual bundle can be identified with $\sT\pi_Q:\sT\sT^\ast Q\rightarrow\sT Q$. Parity between elements of $\sT\sT^\ast Q$ and $\sT\sT Q$ with the same tangent projection on $\sT Q$ can be given in terms of curves. For any $w\in \sT\sT^\ast Q$ and $u\in \sT\sT Q$ such that $\sT\pi_Q(w)=\sT\tau_Q(u)$ there exist two curves $t\mapsto p(t)$ and $t\mapsto v(t)$ in $\sT^\ast Q$ and $\sT Q$ respectively, covering the same curve in $Q$ so it makes sense to write $\langle p(t),v(t)\rangle$. The tangent pairing is defined by the following formula
\be\label{not:14}
\langle\!\langle w,u\rangle\!\rangle =\frac{\sd}{\sd t}_{|t=0} \langle p(\cdot),v(\cdot)\rangle.
\ee

The formula connecting $\alpha_Q$ with the bracket of vector fields is complicated. It is more convenient to relate the bracket with $\kappa_Q$. First recall that if $e$, $f$ are elements of the same fiber of $E\rightarrow M$ then we can lift $f$ vertically to $e$. Namely $f^{\sv}_e$ is an element of $\sT_e E$ tangent at $t=0$ to the curve $t\mapsto e+tf$. The formula relating the Lie bracket of vector fields with $\kappa_M$ reads
\be\label{not:13}
\sT X(Y)-\kappa_Q(\sT Y(X))=[X, Y]^\sv_{Y}.
\ee
From the symplectic point of view $\alpha_Q$ is a symplectomorphism with respect to $\sd_{\sT}\omega_Q$ and $\omega_{\sT Q}=\sd \varphi_i\wedge \sd q^i+\sd \psi_j\wedge\sd \dot q^j$.

The Routh reduction relation is constructed out of a vector field on Q. For this purpose we need few elementary facts. Let then $X$ be a vector field on $Q$. The {\it complete lift} of $X$ is a vector field $\sd_{\sT} X$ on $\sT Q$ given by the formula
\be\label{not:1} \sd_{\sT}X(v)=\kappa_Q(\sT X(v)). \ee If $\varphi_t$ denotes the flow of $X$ then $\Phi_t:\sT Q\rightarrow \sT Q$,
$\Phi_t=\sT\varphi_t$ is the flow of $\sd_\sT X$. In adopted coordinates on $\sT Q$ the complete lift of a vector field
$X(q)=X^i(q)\frac{\partial}{\partial q^i}$ reads \be\label{not:2} \sd_{\sT} X(q,\dot q)=X^i(q)\frac{\partial}{\partial q^i}+\frac{\partial
X^j}{\partial q^k}\dot q^k\frac{\partial}{\partial\dot q^j}. \ee

Vector fields can also be lifted to the cotangent bundle. Let $\imath_X$ denote a function on $\sT^\ast Q$, linear in fibres, given by $X$,
\be\label{not:3}\imath_X(p)=\langle\, p, X(\pi_Q(p))\,\rangle. \ee The hamiltonian vector field for the function $\imath_X$ is denoted by
$\sd_{\sT^\ast}X$ and called the {\it cotangent lift} of a vector field $X$. One may use the formula \be\label{not:4}
\sd_{\sT^\ast}X(p)=\beta_Q^{-1}(\sd\imath_X(p)). \ee Flow of the cotangent lift of $X$ is given by $\Phi^\ast_t=\sT^\ast\varphi_{-t}$ while
the coordinate expression for the lift in adopted coordinates reads
\be\label{not:5} \sd_{\sT^\ast}X(q,p)=X^i(q)\frac{\partial}{\partial
q^i}-\frac{\partial X^j}{\partial q^k}\dot p_j\frac{\partial}{\partial p_k}. \ee

There exists another useful characterization of the cotangent lift of a vector field. The field $\sd_{\sT^\ast} X$ is uniquely characterized by the property (see \cite{KU})
$$\langle\!\langle\sd_{\sT^\ast} X(p), \sd_{\sT} X(v)\rangle\!\rangle=0.$$

\bigskip

\subsection{Mechanics}\label{sub:mech}

In this paper we shall use Tulczyjew approach to mechanics as introduced in \cite{T1,T2,T3,T4}. In this approach Lagrangian and Hamiltonian are two different generating objects of the same Lagrangian submanifold $\mathcal{D}$ (the dynamics) of $\sT\sT^\ast Q$. The dynamics $\mathcal{D}$, being a subset of the tangent bundle, is (sometimes implicit) differential equation on curves in the phase space $\sT^\ast Q$. Euler Lagrange equations, traditionally associated with Lagrangian mechanics, are just consequences of phase equations for paths in configuration space. Hamiltonian and Lagrangian formulations of mechanics are equivalent on the infinitesimal level, but Hamiltonian formulation, unlike Lagrangian formulation, has an infinitesimal version only.

Usually, the dynamics $\mathcal{D}$ is generated by a Lagrangian function $L:\sT Q\rightarrow \R$, more precisely
$$\mathcal{D}=\alpha_Q^{-1}(\sd L(\sT Q)).$$
In regular cases, the dynamics can  also be generated by a Hamiltonian function $H:\sT^\ast Q\rightarrow \R$ (depending on convention, Hamiltonian generating object may be plus or minus Hamiltonian function).
$$\mathcal{D}=\beta_Q^{-1}(\sd H(\sT^\ast Q)).$$
There exist systems for which more general generating objects are needed both in Lagrangian and Hamiltonian formalisms. Geometrical structures of Tulczyjew mechanics can be summarized in a diagram called {\it Tulczyjew triple}. Left-hand side corresponds to Hamiltonian approach and right-hand side to Lagrangian.
\be\label{not:15} \xymatrix@C-20pt@R-5pt{
 & & & & \mathcal{D}\ar@{ (->}[d] & & & & \\
 & \sT^\ast\sT^\ast Q  \ar[dr] \ar[ddl]
 & & & \sT\sT^\ast Q \ar[rrr]^{\alpha_Q} \ar[dr]\ar[ddl]\ar[lll]_{\beta_Q}
 & & & \sT^\ast\sT Q\ar[dr]_{\pi_{\sT Q}}\ar[ddl]_{\xi} & \\
 & & \sT Q\ar@{.}[rrr]\ar@{.}[ddl] & & & \sT Q\ar@{.}[rrr]\ar@{.}[ddl]
 & & & \sT Q \ar@{.}[ddl]\ar@/_1pc/[ul]_{\sd L}\ar[dll]_{\lambda}\\
 \sT^\ast Q\ar@{.}[rrr]\ar@{.}[dr] \ar@/^1pc/[uur]^{\sd H} & & & \sT^\ast Q\ar@{.}[rrr]\ar@{.}[dr]
 & & & \sT^\ast Q\ar@{.}[dr] & &  \\
 & Q\ar@{.}[rrr]& & & Q\ar@{.}[rrr]& & & Q &
}\ee

The Legendre transformation is understood as the transition from Lagrangian to Hamiltonian formulation of mechanics. On the level of generating objects it involves composition of a Lagrangian with the generating object of the canonical isomorphism $\sT^\ast\sT^\ast Q\simeq \sT^\ast\sT Q$. The general Hamiltonian generating object is the following function
\begin{equation}\label{not:15a}
  \sT Q\times_Q \sT^\ast Q\ni (v,p)\longmapsto F(v,p)=L(v)-\langle\, p,\,v\,\rangle\in\R,
\end{equation}
Which is understood as a family of functions on $\sT^\ast Q$ parameterized by elements of $\sT Q$. In regular cases, i.e. for regular Lagrangians we can get rid of the parameters and simplify $F$ to a Hamiltonian function on $\sT^\ast Q$. The Legendre map, on the on the other hand, associates momenta to velocities and reads
\begin{equation}\label{not:15b}
  \lambda: \sT Q\longmapsto \sT^\ast Q,\qquad \lambda=\xi\circ\sd L.
\end{equation}
For the details of symplectic relations and generating objects one may consult \cite{LM,B}.

There are several generalizations known for Tulczyjew mechanics. One of them is the Tulczyjew version of mechanics on algebroids (see \cite{GGU3,GG}). The generalized version of Tulczyjew triple for mechanics on algebroids is based on the following diagram:
\be\label{not:16} \xymatrix@C-10pt@R-5pt{
 & & & & \mathcal{D}\ar@{ (->}[d] & & & & \\
 & \sT^\ast E^\ast \ar[rrr]^{\tilde\Lambda}  \ar[dr] \ar[ddl]
 & & & \sT E^\ast \ar[dr]\ar[ddl]
 & & & \sT^\ast E\ar[dr]_{\pi_{E}}\ar[ddl]\ar[lll]_{\varepsilon} & \\
 & & E\ar@{.}[rrr]\ar@{.}[ddl] & & & \sT M\ar@{.}[rrr]\ar@{.}[ddl]
 & & & E \ar@{.}[ddl]\ar@/_1pc/[ul]_{\sd L}\ar[dll]_{\lambda}\\
 E^\ast\ar@{.}[rrr]\ar@{.}[dr] \ar@/^1pc/[uur]^{\sd H} & & & E^\ast\ar@{.}[rrr]\ar@{.}[dr]
 & & & E^\ast \ar@{.}[dr] & &  \\
 & M\ar@{.}[rrr]& & & M\ar@{.}[rrr]& & & M &
} \ee
with $\varepsilon$ encoding the structure of an algebroid.

\subsection{Differential geometry of affine values}\label{sub:affine}

In classical mechanics probably the most common tool we use is linear algebra. Vector spaces, vector bundles and linear maps are almost everywhere. It is clear however that there are physical quantities that should not be represented by linear mathematical objects. For example in Newtonian mechanics the transformation rules for momentum associated to the change of inertial frame are of affine character. Therefore, in frame independent formulation momenta should be elements of some kind of affine phase space. In the following we shall recall bits and pieces of the {\it geometry of affine values} that has already been successfully applied to mechanics of charged particles, Newtonian mechanics, time dependent mechanics as well as higher order mechanics and field theory. For the details one may consult \cite{GGU1,GGU2,TU,U,GV}.

Let $\zeta: Z\rightarrow M$ be a $(\R, +)$ principal bundle. It can be regarded also as an affine bundle modelled on the trivial bundle $M\times \R\rightarrow M$. In the literature it is called a {\it bundle of affine values} ({\it av}\,-bundle), since sections of this bundle replace functions in affine differential calculus. We can for example define differentials of sections in the same way as we define differentials of functions. Let us note that the difference of two local sections $\sigma$, $\sigma'$ of $\zeta$ defined on some open subset $\mathcal{O}$ of $M$ is a function on $\mathcal{O}$. Indeed $(\sigma'-\sigma)(x)=\sigma'(x)-\sigma(x)\in\R$. We shall say that $\sigma$ and $\sigma'$ have the same differential at $x\in M$ if $\sd(\sigma-\sigma')(x)=0$. Pairs $(x,\sigma)$ and $(x',\sigma')$ are equivalent if $x=x'$ and $\sigma$ and $\sigma'$ have the same differential at $x$. The set of equivalence classes of pairs $(x,\sigma)$ will be denoted by $\sP Z$ and called {\it affine phase bundle}. The equivalence class of $(x,\sigma)$ will be denoted by $\sd\sigma(x)$. In particular, $\sP(M\times \R) =\sT^\ast M$. It is easy to see that $\sP Z$ is an affine bundle over $M$ modelled on $\sT^\ast M$.  Moreover, it carries the canonical symplectic form $\omega_Z$. To see that, let us observe that we can use the differential $\sd\sigma_0$ of the chosen section $\sigma_0$ to identify $\sP Z$ with $\sT^\ast M$. The form $\omega_Z$ can be given as the pull-back of $\omega_M$ by this identification. In fact $\omega_Z$ does not depend on the choice of $\sigma_0$ since $\omega_M$ is invariant with respect to translation by the differential of a function. The image of a differential $\sd \sigma$ of a
section $\sigma$ is a Lagrangian submanifold of $\sP Z$. A section can be therefore considered {\it an affine generating object}. There are obvious generalizations of affine generating objects like affine Morse families and  sections over submanifolds \cite{U}.

Let $Z_1$ and $Z_2$ be {\it av}\,-bundles over $M_1$ and $M_2$ respectively. An {\it av}\,-relation $R:Z_1\rightarrow Z_2$ is a differential relation such that $z_2\in R(z_1)$ implies $z_2+t\in R(z_1+t)$ and $z_2+s\notin R(z_1)$ for $s\neq 0$. To define the graph of an {\it av}\,-relation we introduce first an {\it av}\,-bundle $Z_2\ominus Z_1$ with the base manifold $M_2\times M_1$. $Z_2\ominus Z_1$ is a manifold of equivalence classes of the following equivalence relation in $Z_2\times Z_1$:
$$(z_2, z_1)\sim (z_2+t, z_1-t), \; t\in\R.$$
It is an immediate observation that $\sP(Z_2\ominus Z_1)=\sP(Z_2)\ominus \sP(Z_1)$, where $\sP(Z_2)\ominus \sP(Z_1)=\sP(Z_2)\times\sP(Z_1)$ as manifolds but with $\omega_{Z_2}-\omega_{Z_1}$ as the associated symplectic form. The graph of $R$ in the category of {\it av}\,-bundles (denoted here by $\mathrm{graph}(R)$) is the image in $Z_2\ominus Z_1$ of the graph of $R$ in the category of sets. $\mathrm{graph}(R)$ is then the image of a section of $Z_2\ominus Z_1$ over the graph of underlying relation $\underbar{R}: M_1\rightarrow M_2$. The phase lift of $R$ is the Lagrangian submanifold of $\sP(Z_2)\ominus\sP(Z_1)$ generated by $\mathrm{graph}(R)$.

The affine phase bundle $\sP Z$ can be obtained also by reduction from $\sT^\ast Z$. Every section $\sigma$ of $\zeta$ corresponds to the function
$$f_\sigma: Z\rightarrow\R,\quad f_\sigma(z)=\sigma(\zeta(z))-z.$$
Differentials of functions of the form $f_\sigma$ fill the coisotropic submanifold $K_1$ of covectors $\varphi$ satisfying $\langle\varphi,\partial_r\rangle=-1$. $\sP Z$ can be identified with reduction of $\sT^\ast Z$ with respect to $K_1$. Characteristics of $K_1$ correspond to orbits of the group action lifted to $\sT^\ast Z$, so that reduction coincides with dividing by the group action. Adopting this point of view we can say that $\sP Z$ is an affine subbundle
in the dual to Atiyah algebroid $A^\ast(Z)$ or that $A^\ast(Z)$ is a vector hull of $\sP Z$. In the context of the geometry of affine values the Atiyah algebroid $A(Z)$ is usually denoted  by $\widetilde\sT Z$. In particular, $\widetilde\sT (M\times\R) = \sT M\times \R$ and the numbers are values of the pairing of vectors and covectors. This observation justifies the following definition of the pairing between $\sP Z$ and $\sT M$.

\begin{definition}\label{def:1}
Let $(p,v)\in \sP Z\times_M \sT M$,  then $\langle p,v\rangle$ is an element of $\widetilde\sT Z$ represented by $\sT\sigma (v)$, where $\sigma$ is a section of $Z$ representing $p$.
\end{definition}

Let us observe that $\widetilde\sT Z$ is itself a bundle of affine values over $\sT M$ with the group action given by distinguished element $\partial_r$ in $\widetilde\sT Z$, i.e. invariant vector field generating group action on $Z$. Adding a number $s$ to an element of $\widetilde\sT Z$ means adding $s$ times the vector $\partial_r$.

Affine phase bundle $\sP Z$ and Atiyah algebroid $\widetilde\sT Z$ replace $\sT^\ast M$ and $\sT M\times \R$ in an affine version of the Tulczyjew triple \cite{U}
\be\label{not:21}
\xymatrix@C-20pt@R-5pt{
 & \sT^\ast\sP Z  \ar[dr] \ar[ddl]
 & & & \sT\sP Z \ar[rrr]^{\alpha} \ar[dr]\ar[ddl]\ar[lll]_{\beta}
 & & & \sP \widetilde\sT Z\ar[dr]\ar[ddl] & \\
 & & \sT M\ar@{.}[rrr]\ar@{.}[ddl] & & & \sT M\ar@{.}[rrr]\ar@{.}[ddl]
 & & & \sT M \ar@{.}[ddl]\ar@/_1pc/[ul]_{\sd L}\ar[dll]_{\lambda}\\
 \sP Z\ar@{.}[rrr]\ar@{.}[dr] \ar@/^1pc/[uur]^{\sd H} & & & \sP Z\ar@{.}[rrr]\ar@{.}[dr]
 & & & \sP Z\ar@{.}[dr] & &  \\
 & M\ar@{.}[rrr]& & & M\ar@{.}[rrr]& & & M &
}
\ee
The map $\beta$ comes from the symplectic form $\omega_Z$ while map $\alpha$ is reduced form $\alpha_Z$. The relation between $\sT\sT^\ast Z$ and $\sT \sP Z$ is by symplectic reduction with respect to $\sT K_1\subset\sT \sT^\ast Z$ and symplectic form $\sd_{\sT}\omega_Z$. The relation between $\sT^\ast\sT Z$ and $\sP \widetilde\sT Z$ is by symplectic reduction with respect to $\alpha_Z(\sT K_1)$ and canonical symplectic form on $\sT^\ast\sT Z$.

\section{Reductions and symmetries}\label{sec:sym}

Routh reduction in its classical formulation concerns mechanical system possessing cyclic variables which means that Lagrangian of the system does not depend on one or more configuration coordinates. If we want to work with geometric objects rather than specific coordinates we have to express this feature differently. Our starting point will be a system with configuration
manifold $Q$. The symmetry will be encoded in a smooth non-vanishing vector field on $Q$. For technical reasons we shall assume that the flow of $X$ is regular in a sense that the set  $Q_X$  of trajectories of $X$ is a manifold. The map $Q\rightarrow Q_X$, the trajectory through a point $q$ to the point $q$ will be denoted by $\zeta$. The fact that the mechanical system with Lagrangian $L:\sT Q\rightarrow \R$ has the symmetry $X$ is given by the equation
\be\label{rr:1}
\sd_{\sT} X (L)=0,
\ee
which is the simplest geometrical way of expressing the idea of `a Lagrangian not depending on one of the configuration coordinates'. We do not assume $Q$ to be a Cartesian product as it is usually done for other versions of Routh reduction.

Traditionally Routh reduction provides us with a function called Ruthian that plays the role of new, reduced Lagrangian. In the following we adopt a little different point of view. Starting from $Q$ and $X$ we construct the {\it Routh reduction relation} which gives us the reduced phase space and the procedure to obtain reduced generating object for dynamics. This reduction relation is independent on the particular system, i.e. particular Lagrangian. It can be then applied to any system with symmetry. In fact it can be applied to any system with configuration manifold $Q$, but the results of this reduction are valuable only for systems with symmetry. If a Lagrangian satisfies (\ref{rr:1}) then the behaviour of the full system can be recovered from the behaviour of the reduced one.

The theory that we are going to present is based on symplectic relations. Since any symplectic relation is a Lagrangian submanifold of the product symplectic manifold (in this case mostly cotangent bundles and their affine versions), we will be using the language of generating functions for these submanifolds. Of course, sometimes it is necessary to use more general generating objects as functions on submanifolds or Morse families \cite{B,LM}.

Since Ruthian is a reduced Lagrangian, then the reduction should be in principle performed on the Lagrangian side of the Tulczyjew triple. However, it is the theory of Hamiltonian reduction which is well established and understood. We therefore first pass to Hamiltonian side of the Tulczyjew triple and perform the reduction there and then come back to the Lagrangian mechanics again. The direct Lagrangian reduction will be discussed in section \ref{sub:rutlag}. In section \ref{sub:alg} we comment on reduction to mechanics on Lie algebroid.

\subsection{Routh reduction relation via Hamiltonian mechanics.}\label{sub:rutham}

Lagrangian systems and Hamiltonian systems with configuration manifold $Q$ are described by dynamics $\mathcal{D}$ which is a first order differential equation on curves in the phase space $\sT^\ast Q$. In the following we will find the reduced phase space and the describe the nature of the reduced generating object for the reduced dynamics.
Passing from Lagrangian to Hamiltonian mechanics means using Legendre transformation. On the level of Lagrangian submanifolds it means that we apply the symplectomorphism $\gamma_Q:\sT^\ast\sT Q\longrightarrow\sT^\ast\sT^\ast Q$ to $\sd L(\sT Q) \subset \sT^\ast\sT Q$. On the level of generating objects we have to add a generating object of the symplectomorphism $\gamma_Q$ to the generating object of the dynamics and (possibly) reduce it. The symplectomorphism $\gamma_Q$ is generated by the evaluation function defined on the submanifold of vectors and covectors over the same point
$$\sT^\ast Q\times\sT Q\supset \sT^\ast Q\times_Q\sT Q\ni(p,v)\longmapsto -\langle p,\,v\rangle\in\R.$$
Adding this to the Lagrangian we get the  Hamiltonian Morse family
\begin{equation}\label{rut:1}\xymatrix{
\sT^\ast Q\times_Q\sT Q \ar[d]\ar[r] & \R \\
\sT^\ast Q & },\quad (p,v)\longmapsto L(v)-\langle p,\,v\rangle.\end{equation}
In some cases the family can be reduced to a simpler one, or even to one function called Hamiltonian. Since we are going to define Routh reduction relation first we shall skip the Lagrangian and work with relations between symplectic manifolds.

In Hamiltonian description of the dynamics, the symmetry is given by the cotangent lift $\sd_{\sT^\ast}X$ of the vector field $X$. We know then that phase trajectories of any system with this symmetry lie in level sets of the Hamiltonian function for $\sd_{\sT^\ast}X$ i.e. $\imath_X$. The level sets are coisotropic submanifolds of $\sT^\ast Q$ of codimension $1$. Let us then choose a value $\alpha\in\R$ and define $C_\alpha=\imath_X^{-1}(\alpha)$.  Since $\imath_X$ is a linear function, $C_\alpha$ is fiber-by-fiber over $Q$ an affine subspace of the appropriate fibre of the cotangent bundle. The model vector space for the affine space $C_\alpha\cap\sT_q^\ast Q$ is the intersection $C_0\cap\sT_q^\ast Q$. Symplectic reduction with respect to $C_\alpha$ is the same as dividing $C_\alpha$ by a cotangent lift of $(\R,+)$-action given by the flow of $X$. What we get is a symplectic reduction relation
$$\sT^\ast Q\supset C_\alpha\longrightarrow P_\alpha$$
with $P_\alpha$ being the reduced symplectic manifold. Since lifted group action is linear (i.e. the group acts by linear maps between fibres) the reduced symplectic manifold also has affine character. More precisely it is an affine bundle over $Q_X$. Every fibre is modelled on $C_0$ divided by $\R$-action which clearly is isomorphic $\sT^\ast Q_X$. We can see now that $P_\alpha\rightarrow Q_X$ looks like an affine
phase bundle. Let us find the appropriate {\it av}\,-bundle.

In $Q\times\R$ we define an equivalence relation
\begin{equation}\label{rut:1b}
(q,r)\simeq (\varphi_s(q), r+s\alpha)
\end{equation}
The quotient manifold $Z_\alpha$ is a bundle over $Q_X$. Moreover it is an {\it av}\,-bundle with the action $[q,r]+t:=[q, r+t]=[\varphi_s(q), r+t+s\alpha]$. Associated to the relation we have the map
\begin{equation}\label{rut:1a}
\zeta_\alpha: Q\times \R\ni(q,r)\longrightarrow [q,r]\in Z_\alpha
\end{equation}
which is an {\it av}\,- bundle morphism.

\begin{prop}\label{prop:1a} The manifolds $P_\alpha$ and $\sP Z_\alpha$ are isomorphic as affine bundles and symplectomorphic.
\end{prop}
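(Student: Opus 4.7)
The plan is to write down an explicit map \(\Phi\colon \sP Z_\alpha \to P_\alpha\), verify that it is a fibre-preserving bijection over \(Q_X\), and then check it respects the affine and symplectic structures.

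The key preliminary observation is that a section \(\sigma\) of \(\zeta_\alpha\) over an open \(U\subset Q_X\) is exactly the same data as a smooth function \(f\) on the saturated open set \(\zeta^{-1}(U)\) satisfying \(f\circ\varphi_s = f + s\alpha\), equivalently \(X(f) = \alpha\); writing \(\sigma(\bar q) = [q, f(q)]\), independence of the choice of lift \(q\) forces this condition. For such an \(f\) we have \(\sd f(q) \in C_\alpha\) for every \(q\), and \(\varphi_s^\ast f = f + s\alpha\) gives \(\varphi_s^\ast(\sd f) = \sd f\), so the cotangent lift \(\sd_{\sT^\ast}X\) carries \(\sd f(q)\) to \(\sd f(\varphi_s(q))\). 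I therefore set
\[
 \Phi(\sd\sigma(\bar q)) := [\sd f(q)] \in P_\alpha,
\]
independent of the lift \(q\in\zeta^{-1}(\bar q)\) by the identity just noted, and independent of the choice of representative section: changing \(\sigma\) by \(h\in C^\infty(U)\) with \(\sd h(\bar q)=0\) changes \(f\) by \(h\circ\zeta\), whose differential vanishes on the whole orbit \(\zeta^{-1}(\bar q)\).

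To check bijectivity and the affine structure, I would work in coordinates \((q^1,\dots,q^n)\) adapted to \(X=\partial/\partial q^1\), so that \(C_\alpha = \{p_1=\alpha\}\) and the cotangent lifted flow is mere translation in \(q^1\). Then \(P_\alpha\) is coordinatised by \((q^2,\dots,q^n,p_2,\dots,p_n)\), \(Z_\alpha\) by \((q^2,\dots,q^n,r)\), and \(\sP Z_\alpha\) again by \((q^2,\dots,q^n,p_2,\dots,p_n)\). A general point \((q^i,\alpha,p_2,\dots,p_n)\in C_\alpha\) is realised as \(\sd f(q)\) for \(f=\alpha q^1 + \sum_{j\ge 2} p_j q^j\), proving surjectivity, and injectivity is then clear. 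Both \(P_\alpha\) and \(\sP Z_\alpha\) are affine bundles over \(Q_X\) modelled on \(\sT^\ast Q_X\)—the former via \((C_0\cap\sT^\ast_q Q)\cong\sT^\ast_{\bar q} Q_X\), the latter by the general discussion of section~\ref{sub:affine}—and translating \(\sigma\) by \(h\in C^\infty(U)\) replaces \(\sd f(q)\) by \(\sd f(q) + \zeta^\ast\sd h(\bar q)\), which is precisely the canonical affine action on \(P_\alpha\).

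For the symplectic compatibility I would fix an auxiliary section \(\sigma_0\) with associated function \(f_0\) and use it to trivialise both sides as \(\sT^\ast Q_X\): on \(\sP Z_\alpha\) via \(\sd\sigma(\bar q)\mapsto \sd(\sigma-\sigma_0)(\bar q)\), which pulls back \(\omega_{Q_X}\) to \(\omega_{Z_\alpha}\) by section~\ref{sub:affine}; on \(P_\alpha\) via \([p]\mapsto [p-\sd f_0(\pi_Q(p))]\), which identifies \(P_\alpha\) with the reduction of \(C_0\), hence with \(\sT^\ast Q_X\) by the standard cotangent-bundle reduction. In the adapted coordinates above, taking \(f_0 = \alpha q^1\), both forms become \(\sum_{j\ge 2}\sd p_j\wedge\sd q^j\), and under \(\Phi\) the two identifications coincide, so \(\Phi\) is a symplectomorphism. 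The main obstacle I expect is precisely this last point: showing intrinsically—rather than merely in adapted coordinates—that the pull-back construction of \(\omega_{Z_\alpha}\) and Marsden--Weinstein reduction of \(\omega_Q|_{C_\alpha}\) yield the same form under \(\Phi\), independent of the auxiliary choice of \(\sigma_0\).
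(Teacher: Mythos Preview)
Your approach is essentially identical to the paper's: both construct the map $\sP Z_\alpha\to P_\alpha$ by observing that a section $\sigma$ of $Z_\alpha$ corresponds to a function $f$ on $Q$ with $X(f)=\alpha$, hence $\sd f(q)\in C_\alpha$, and that two sections have the same differential at $\bar q$ exactly when the corresponding covectors lie on the same orbit of the lifted action. The paper is considerably terser---it simply asserts the affine property and says the symplectomorphism can be checked ``by direct calculation''---whereas you actually carry out that calculation in adapted coordinates and are explicit about well-definedness; your worry about an intrinsic proof of symplectic compatibility is not addressed by the paper either, which is content with the coordinate check.
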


\noindent{\bf Proof:} Every element of $Z_\alpha$ corresponds to a function on appropriate fibre of the projection $Q\rightarrow Q_X$, with the property that differentiated in the direction of $X$ gives $\alpha$. A section $\sigma$ of $Z_\alpha$ then corresponds to a function $f_\sigma$ on $Q$, such that $\imath_X(\sd f_\sigma(q))=\alpha$.
It means that $\sd f_\sigma(q)\in C_\alpha$.  Two sections $\sigma_1$ and $\sigma_2$ have the same differential at $x\in Q_X$ if and only if differentials of functions $f_{\sigma_1}$ and $f_{\sigma_2}$ lie on the same orbit of lifted $\R$-action. It means that every orbit of the lifted action in $C_\alpha$ corresponds to one class of sections. We have then the well defined map $\sP Z_\alpha\rightarrow P_\alpha$. This map covers the identity on $Q_X$ and is affine. One may check by direct calculation that this map is also a symplectomorphism.  $\Box$

The phase space of the reduced system can be then identified with $\sP Z_\alpha$. The phase lift of the reduction $\sT^\ast Q\supset C_\alpha\rightarrow \sP Z_\alpha$ is a symplectic relation from $\sT^\ast\sT^\ast Q$ to $\sT^\ast\sP Z_\alpha$ generated by the function
equal to zero defined on the graph of the base relation, i.e. on $C_\alpha\times_{Q_X}\sP Z_\alpha$. Adding this generating object to the
previous one, i.e. to (\ref{rut:1}), results in cutting the domain. The Hamiltonian generating object for reduced dynamics now reads
\begin{equation}\label{rut:2}
\xymatrix{
C_\alpha\times_Q\sT Q \ar[d]\ar[r] & \R \\
\sP Z_\alpha & },\quad (p,v)\longmapsto L(v)-\langle p,\,v\rangle.\end{equation} The dynamics is a submanifold of $\sT \sP Z_\alpha$. Looking at the diagram (\ref{not:21}) we see that the Lagrangian generating object of this dynamics should be a section or a family of sections of the {\it av}\,-bundle $\widetilde\sT Z_\alpha\rightarrow\sT Q_X$. Passing to that generating object means performing the Legendre transformation once again, this time in the opposite
direction. Generating object of the isomorphism between $\sT^\ast\sP Z_\alpha$ and $\sP\widetilde\sT Z_\alpha$ is a section of the {\it av}\,-bundle $\sP Z_\alpha\times \widetilde\sT Z_\alpha\longrightarrow \sP Z_\alpha\times \sT Q_X$ over the submanifold  $\sP Z_\alpha\times_{Q_X} \sT
Q_X$ given by affine pairing between tangent vectors and affine covectors (see Definition \ref{def:1})
$$\sP Z_\alpha\times \sT Q_X\supset \sP Z_\alpha\times_{Q_X} \sT Q_X\ni (p_\alpha, w)\longmapsto \langle p_\alpha, w\rangle\in \widetilde\sT Z_\alpha.$$
Final Lagrangian generating object of the reduced dynamics is the following family of sections
\begin{equation}\label{rut:3}
\xymatrix{
\sT Q_X\times_{Q_X} C_\alpha\times_Q\sT Q\ar[d]\ar[r] & \widetilde\sT Z_\alpha \\
\sT Q_X & },\quad
(w,p,v)\longmapsto L(v)-\langle p,v\rangle+\langle p_\alpha, w\rangle.
\end{equation}
This family is the composition of the Lagrangian and the family
\begin{equation}\label{rut:4}
\xymatrix{
\sT Q_X\times_{Q_X} C_\alpha\times_Q\sT Q\ar[d]\ar[r] & \widetilde\sT Z_\alpha \\
\sT Q_X  \times_{Q_X}\sT Q & },\quad
(w,p,v)\longmapsto -\langle p,v\rangle+\langle p_\alpha, w\rangle.
\end{equation}
generating the relation $\sT^\ast\sT Q \rightarrow \sP \widetilde\sT Z_\alpha$. We simplify (\ref{rut:4}) looking  for stationary points in the direction of $(C_\alpha)_q$, i.e. we keep $q, w, v$ fixed and differentiate in the vertical directions in $C_\alpha$.
This means we change $p$ by an element $\varphi\in (C_0)_q$ and, consequently, $p_\alpha$ by an appropriate element $\varphi_\alpha$ of $\sT^\ast_x Q_X$, with $q$ over $x$. The condition for stationary point is then
$$\langle \varphi_\alpha, w\rangle=\langle \varphi,v\rangle$$
which is true for any $p\in C_\alpha$ if $v$ projects on $w$ while dividing $\sT Q$ by lifted $\R$-action.
The generating family (\ref{rut:4}) simplifies to the section of {\it av}\,-bundle $\sT Q\times_{\sT Q_X}\widetilde\sT Z_\alpha\rightarrow \sT Q$ given by the formula
\begin{equation}\label{rut:7}
\sT Q\ni v\longmapsto -\langle p,v\rangle+\langle p_\alpha, \sT\zeta(v)\rangle\in\widetilde\sT Z_\alpha
\end{equation}
where $p$ is an arbitrary element of $C_\alpha$ over the point $\tau_Q(v)$.
\begin{definition}
  The {\it Routh reduction relation} for a vector field $X$ and value $\alpha\in\R$ is a symplectic relation $\mathcal{R}$ from $\sT^\ast \sT Q$ to $\sP\widetilde{\sT}Z_\alpha$ generated by the section (\ref{rut:7}).
\end{definition}

The Lagrangian generating object is then a family of sections of $\widetilde\sT Z_\alpha\rightarrow\sT Q_X$ parameterized by elements of $\sT Q$
\begin{equation}\label{rut:8}
\xymatrix{
\sT Q\ar[d]\ar[r] & \widetilde\sT Z_\alpha \\
\sT Q_X & },\quad
v\longmapsto L(v)-\langle p,v\rangle+\langle p_\alpha, \sT\zeta(v)\rangle
\end{equation}
The above generating family will be called {\it Routhian}.
Fibres of the bundle $\sT Q\rightarrow \sT Q_x$ on which Routhian is defined are two-dimensional. Simplifying the family means finding stationary points in the direction of $X$ in every tangent space $\sT_q Q$ and then along $\sd_\sT X$ trajectories. Differentiating in the direction of $X$ yields the equation
\begin{equation}\label{rut:9}
\langle \sd L(v), X^{\sv}\rangle=\alpha,
\end{equation}
while differentiating along $\sd_\sT X$ trajectories does not lead to any condition for $v$ in case $X$ is a Routh symmetry for $L$. The condition (\ref{rut:9}) can be expressed in terms of the Legendre map $\lambda$: $\langle\lambda(v), X\rangle=\alpha$. In another words, the set of solutions of (\ref{rut:9}) is equal to $\lambda^{-1}(C_\alpha)$. For symmetric Lagrangian we have the following proposition
\begin{prop}\label{prop:2}
  For Lagrangian satisfying (\ref{rr:1}) we have
\be\label{rr:3}
\Phi_t^\ast\circ\lambda=\lambda\circ\Phi_t
\ee
\end{prop}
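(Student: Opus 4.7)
The plan is to integrate the symmetry condition to a finite statement about $L$ and then propagate it through the construction $\lambda = \xi\circ\sd L$ using naturality of $\xi$ under diffeomorphisms of $Q$.

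Since $\Phi_t = \sT\varphi_t$ is the flow of $\sd_\sT X$, the hypothesis $\sd_\sT X(L)=0$ integrates to $L\circ\Phi_t = L$ on $\sT Q$ for every $t$. Taking exterior differentials on $\sT Q$ yields the pointwise identity
$$\sd L(v) \;=\; (\sT_v\Phi_t)^\ast\bigl(\sd L(\Phi_t(v))\bigr), \qquad v\in \sT Q,$$
which, inverting the isomorphism $\sT_v\Phi_t$, is exactly the equivariance of $\sd L : \sT Q \to \sT^\ast \sT Q$ with respect to $\Phi_t$ on the source and the cotangent lift of $\Phi_t^{-1}$ on the target.

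The second ingredient is that the map $\xi$ is natural with respect to base diffeomorphisms: for any $\psi: Q\to Q$ the square formed by $\xi$ at the top and bottom, by the cotangent lift of $\sT\psi$ on the source $\sT^\ast \sT Q$, and by $\sT^\ast\psi$ on the target $\sT^\ast Q$, commutes. This can be checked from the coordinate formulas (\ref{not:10}) and (\ref{not:5}), or more conceptually from the fact that $\alpha_Q^{-1}$ (and hence $\xi$) is built from the canonical tangent pairing (\ref{not:14}), which is manifestly functorial in $Q$. Applying this naturality with $\psi=\varphi_t$ and combining with the preceding display gives
$$\lambda\circ\Phi_t \;=\; \xi\circ \sd L\circ\Phi_t \;=\; \Phi_t^\ast\circ\xi\circ \sd L \;=\; \Phi_t^\ast\circ\lambda,$$
which is (\ref{rr:3}).

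The only nontrivial point is the naturality of $\xi$, which amounts to the functoriality of the tangent and cotangent bundle constructions together with the canonical character of the Tulczyjew isomorphism $\alpha_Q$. An equivalent and perhaps shorter route bypasses this by differentiating (\ref{rr:3}) at $t=0$ and reducing the claim to the infinitesimal identity $\sT\lambda\circ \sd_\sT X = \sd_{\sT^\ast}X\circ\lambda$, which follows from the characterization of $\sd_{\sT^\ast}X$ by the vanishing of $\langle\!\langle \sd_{\sT^\ast}X(p), \sd_\sT X(v)\rangle\!\rangle$ recalled after (\ref{not:5}), combined with the definition $\lambda=\xi\circ \sd L$ and the symmetry $\sd_\sT X(L)=0$.
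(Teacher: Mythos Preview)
Your argument is correct, but it is organized differently from the paper's. The paper does not decompose $\lambda=\xi\circ\sd L$ at all; it uses only the fiber-derivative description
\[
\langle \lambda(v),\,w\rangle=\frac{\sd}{\sd s}\Big|_{s=0}L(v+sw)
\]
and checks $\langle\Phi_t^\ast(\lambda(v)),w\rangle=\langle\lambda(\Phi_t(v)),w\rangle$ by a four-line chain: the first equality is the definition of $\Phi_t^\ast=\sT^\ast\varphi_{-t}$, the middle steps use that $\Phi_{-t}$ is linear on fibers together with $L\circ\Phi_{-t}=L$, and the last equality is again the fiber-derivative formula. No naturality statement about $\xi$ or $\alpha_Q$ is invoked.

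Your route trades that explicit computation for a structural one: equivariance of $\sd L$ is immediate from $L\circ\Phi_t=L$, and the burden shifts to the naturality of $\xi$ under point transformations of $Q$. That naturality is true and, as you say, is a consequence of the canonical character of $\alpha_Q$ (or can be read off from the coordinate expressions), so the argument closes; it also makes clear that the statement is really a general fact about Legendre maps under symmetries, independent of any particulars of $L$. The paper's computation, on the other hand, is entirely self-contained and avoids having to justify the naturality lemma separately. Your final remark about the infinitesimal version via $\langle\!\langle\sd_{\sT^\ast}X(p),\sd_\sT X(v)\rangle\!\rangle=0$ is a reasonable sketch, but note that turning it into a complete proof still requires showing that $\sT\lambda(\sd_\sT X(v))$ satisfies that characterizing identity, which amounts to essentially the same fiber-derivative calculation the paper performs.
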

\begin{proof}
Indeed, for any $v\in \sT_xQ$ and any $w\in \sT_{\varphi_t(x)}Q$ we have
\begin{multline*}
\langle\, \Phi^\ast_t(\lambda(v)), w\,\rangle= \langle\,\lambda(v), \Phi_{-t}(w) \,\rangle=
\frac{\sd}{\sd s}_{|s=0} L(v+s\Phi_{-t}(w))=\\
\frac{\sd}{\sd s}_{|s=0} L(v+\Phi_{-t}(sw))= \frac{\sd}{\sd s}_{|s=0} L(\Phi_t(v)+sw)= \langle\, \lambda(\Phi_t(v)), w\,\rangle
\end{multline*}
In the above calculation we have used the fact that $\Phi_t$ is linear in fibres and that $L$ is constant on the trajectories of $\sd_{\sT}X$. Note that (\ref{rr:3}) is true regardless of the regularity of the Lagrangian.
\end{proof}
Equation (\ref{rr:3}) means that $\lambda^{-1}(C_\alpha)$ is invariant with respect to the lifted $\R$-action. Note that even if the Lagrangian $L$ is hyperregular it may be not possible to simplify Routhian to just one section of $\widetilde\sT Z_\alpha\rightarrow \sT Q_X$ (see Example \ref{ex:2}).

\begin{corollary}\label{rut:10} In case the initial Lagrangian $L$ satisfies (\ref{rr:1}) the generating object for reduced dynamics is a family of sections of {\it av}\,-bundle $\widetilde \sT Z_\alpha\rightarrow \sT Q_X$ called Routhian, given by the formula (\ref{rut:8}). The value of the family does not depend on the choice of $p\in (C_\alpha)_{\tau_Q(v)}$. The condition for a stationary point is given by (\ref{rut:9}).
\end{corollary}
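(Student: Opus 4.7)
The plan is to leverage the construction already carried out in the derivation of (\ref{rut:8}): the Routhian arises as the composition of $L$ with the section (\ref{rut:7}) generating the Routh reduction relation, so the formula itself requires no additional argument. What remains is to verify the two properties stated explicitly in the corollary, namely independence from the choice of $p\in(C_\alpha)_{\tau_Q(v)}$ and the stationary point equation.

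For the independence I would take $p' = p + \varphi$ with $\varphi\in (C_0)_q$; by the construction of $\sP Z_\alpha$ as the symplectic reduction of $C_\alpha$, the corresponding affine class $p_\alpha$ shifts by a unique $\varphi_\alpha\in \sT^\ast_x Q_X$. The required identity
\[
-\langle p',v\rangle + \langle p'_\alpha, \sT\zeta(v)\rangle = -\langle p,v\rangle + \langle p_\alpha, \sT\zeta(v)\rangle
\]
reduces to $\langle \varphi_\alpha, \sT\zeta(v)\rangle = \langle \varphi, v\rangle$, which is precisely the compatibility condition extracted just before (\ref{rut:7}) and is built into the pairing of Definition \ref{def:1}; hence the Routhian is well defined on $\sT Q$.

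For the stationary point analysis I would treat the two fibre directions of $\sT Q \to \sT Q_X$ separately. Along the vertical variation $v \mapsto v + sX(\tau_Q(v))$ in $\sT_q Q$, the tangent map $\sT\zeta$ annihilates $X$ since $X$ is tangent to the fibres of $\zeta$, so the third term of (\ref{rut:8}) is constant in $s$; the middle term contributes $-s\langle p, X\rangle = -s\alpha$ because $p\in C_\alpha$; and the Lagrangian contributes $s\langle \sd L(v), X^{\sv}\rangle$. Differentiating at $s=0$ and setting the result to zero yields exactly (\ref{rut:9}). Along the complementary direction, that of the complete lift $\sd_\sT X$, the symmetry assumption $\sd_\sT X(L) = 0$ eliminates the Lagrangian term, and Proposition \ref{prop:2} permits us to transport $p$ under the cotangent lift $\sd_{\sT^\ast}X$ while the class $p_\alpha$ stays fixed in $\sP Z_\alpha$; a short check using the defining property of the cotangent action shows both pairings are then constant in time, so no new condition on $v$ appears. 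I expect the main bookkeeping obstacle to be precisely this last invariance: one must verify that the freedom to shift the representative $p$ within the fibre of the reduction map is compatible with the flow of $\sd_\sT X$ on the Lagrangian side, which is exactly the content of (\ref{rr:3}).
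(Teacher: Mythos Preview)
Your proposal is correct and mirrors the paper's own argument: the corollary is a summary of the preceding derivation, and you reproduce exactly those steps---independence of $p$ via the identity $\langle\varphi_\alpha,\sT\zeta(v)\rangle=\langle\varphi,v\rangle$ established just before (\ref{rut:7}), the vertical variation yielding (\ref{rut:9}), and the $\sd_\sT X$ direction contributing nothing when (\ref{rr:1}) holds. One small remark: for the $\sd_\sT X$ direction you do not actually need Proposition~\ref{prop:2} or (\ref{rr:3}); the invariance of $C_\alpha$ under $\Phi_t^\ast$ is simply because $\imath_X$ is conserved by its own Hamiltonian flow, and $(p_t)_\alpha=p_\alpha$ holds by the very definition of $\sP Z_\alpha$ as the quotient of $C_\alpha$ by that flow---Proposition~\ref{prop:2} is used in the paper only to conclude afterwards that $\lambda^{-1}(C_\alpha)$ is invariant.
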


Let us now introduce appropriate coordinates and write the coordinate expression for Routhian generating family. Locally we can always choose such chart $(y,x^i)$ in $\mathcal{U}\subset Q$ that $X(y,x^i)=\frac{\partial}{\partial y}$. The $\R$-action is then given by $(t,y,x^i)\mapsto (y+t, x^i)$. The projection on $Q_X$ consists of omitting the first coordinate: $(y,x^i)\mapsto (x^i)$. Tangent and cotangent lifts of $X$ also have very simple expressions, namely
if $(y,x^i,\dot y,\dot x^j)$ and $(y,x^i,s,p_j)$ are adapted coordinates in $\sT Q$ and $\sT^\ast Q$ respectively, then
$$\sd_{\sT} X(y,x^i,\dot y,\dot x^j)=\frac{\partial}{\partial y}\qquad\text{and}\qquad
\sd_{\sT^\ast} X(y,x^i,s,p_j)=\frac{\partial}{\partial y}.$$
The level set $C_\alpha$ of $\imath_X$ is a subset of $\sT^\ast Q$ given by the condition $s=\alpha$.

Adapted coordinates in $Z_\alpha$ are introduced in the following way. First, we choose a reference section $\sigma_0$ of the {\it av}\,-bundle $Z_\alpha\rightarrow Q_X$ using coordinates in $Q$, namely $Q_X\ni (x^i)\mapsto \sigma_0(x)=[(0,x^i), 0]\in Z_\alpha$. The reference section allows us to identify $Z_\alpha$ with $Q_X\times\R$ and then use coordinates in $Q_X$. What we get is $[q,r]\mapsto (x^i(q), r-\alpha y(q))$.

We can then identify $\sP Z_\alpha$ with $\sT^\ast Q_X$ and us coordinates $(x^i, p_j)$ there. Similarly we identify $\widetilde\sT Z_\alpha$ with $\sT Q_X\times\R$ and use $(x^i,\dot x^j, \dot y)$ coordinates there.

The Routhian (\ref{rut:8}) now reads
$$\sT Q\ni (y, x^i, \dot y, \dot x^j)\longmapsto (x^i, \dot x^j, L(y, x^i, \dot y, \dot x^j)-\alpha\dot y-p_j\dot x^j+p_j\dot x^j)\in \widetilde\sT Z_\alpha.$$
We can see that it indeed does not depend on the choice of $p=(y,x^i,\alpha, p_j)\in C_\alpha$, because all summands containing $p_j$ cancel. What we finally have in coordinates is
\begin{equation}\label{rut:11}
\sT Q\ni (y, x^i, \dot y, \dot x^j)\longmapsto (x^i, \dot x^j, L(y, x^i, \dot y, \dot x^j)-\alpha\dot y)\in \widetilde\sT Z_\alpha.\end{equation}
For Lagrangians with symmetry, i.e. not depending on $y$ the family may be further simplified with the only parameter being $\dot y$. The last component in (\ref{rut:11}) is the coordinate form of Routhian present in the literature.

\begin{example}\label{ex:2}{\rm
Let $Q=\R^2$, $L(x,y,\dot x, \dot y)=\dot x\dot y-y^2$. Clearly $x$ is a cyclic variable, we can therefore perform Routh reduction with respect to the vector field $X=\frac{\partial}{\partial x}$. Situation here is very simple, in particular we have the structure of Cartesian product separating cyclic and non-cyclic variables. Instead of a Routhian family of sections we shall have just the Routhian family of functions. The formula (\ref{rut:8}) in this case reads
$$(x,y,\dot x,\dot y)\longmapsto  \dot x\dot y-y^2-\alpha \dot x\in\R.$$
Since we have no dependence on $x$, the family can be simplified to
\begin{equation}\label{ex:1}
(y,\dot x,\dot y)\longmapsto  \dot x\dot y-y^2-\alpha \dot x\in\R.
\end{equation}
It cannot however be simplified to one function over $\sT Q_X$, i.e. in variables $(y,\dot y)$ unless we accept constrained Lagrangian systems. Nevertheless, the family generates the dynamics in $\sT\sT^\ast Q_X$ which is given by a hamiltonian vector field. Indeed, (\ref{ex:1}) generates the following Lagrangian submanifold in $\sT^\ast \sT Q_X$
$$\{(y,\dot y, a, b)\in \sT^\ast\sT Q_X:\; \dot y=\alpha,\; a=-2 y\}.$$
Applying $\alpha_{Q_X}^{-1}$ we get a following submanifold of $\sT\sT^\ast Q_X$
$$\{(y,p,\dot y ,\dot p)\in \sT\sT^\ast Q_X:\; \dot y=\alpha,\; \dot p=-2 y\}$$
which clearly is the image of the vector field
$$X_h(y,p)=\alpha\frac{\partial}{\partial y}-2y\frac{\partial}{\partial p}.$$
The above field is a hamiltonian vector field for $h(y,p)=\alpha p+y^2$.
}\end{example}

\subsection{Routh reduction as Lagrangian reduction.}\label{sub:rutlag}

The purpose of this section is to show that Routh reduction relation $\mathcal{R}$ from $\sT^\ast\sT Q$ to $\sP\widetilde\sT Z_\alpha$ can be performed purely on Lagrangian side of the Tulczyjew triple. Indeed, we have the following proposition
\begin{prop}
The relation $\mathcal{R}$ is the affine phase lift of the {\it av}\,-bundle relation $\widetilde{\sT}(Q\times \R)\rightarrow \widetilde{\sT}Z_\alpha$ which is the reduced tangent to the projection $\zeta_\alpha$ (see equation (\ref{rut:1a})).
\end{prop}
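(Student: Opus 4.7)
The plan is to identify the two symplectic relations by producing a common generating section. First, since $Q\times\R\to Q$ is the trivial {\it av}\,-bundle, one has $\widetilde{\sT}(Q\times\R)=\sT Q\times\R$ as an {\it av}\,-bundle over $\sT Q$, so $\sP\widetilde{\sT}(Q\times\R)=\sT^\ast\sT Q$. Consequently, the phase lift of any {\it av}\,-relation from $\widetilde{\sT}(Q\times\R)$ to $\widetilde{\sT}Z_\alpha$ is automatically a symplectic relation from $\sT^\ast\sT Q$ to $\sP\widetilde{\sT}Z_\alpha$, which is exactly the type of the Routh reduction relation $\mathcal{R}$.

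Next, since $\zeta_\alpha$ is $(\R,+)$-equivariant by (\ref{rut:1b})--(\ref{rut:1a}), the tangent map $\sT\zeta_\alpha$ intertwines the lifted $\R$-actions on $\sT(Q\times\R)$ and $\sT Z_\alpha$, and therefore descends to a well-defined {\it av}\,-bundle morphism $\widetilde{\sT}\zeta_\alpha:\widetilde{\sT}(Q\times\R)\to\widetilde{\sT}Z_\alpha$ covering $\sT\zeta:\sT Q\to\sT Q_X$ on bases. I would then interpret its graph as a section of $\widetilde{\sT}Z_\alpha\ominus\widetilde{\sT}(Q\times\R)$ over $\mathrm{graph}(\sT\zeta)\subset\sT Q_X\times\sT Q$. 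Over the point $(\sT\zeta(v),v)$ the fibre of this {\it av}\,-bundle is canonically identified with $(\widetilde{\sT}Z_\alpha)_{\sT\zeta(v)}$ by choosing the representative with second factor $(v,0)\in\sT Q\times\R$, so the graph of $\widetilde{\sT}\zeta_\alpha$ corresponds to the section $v\mapsto\widetilde{\sT}\zeta_\alpha(v,0)$. Its phase lift, by the definition recalled in section \ref{sub:affine}, is the Lagrangian submanifold of $\sP\widetilde{\sT}Z_\alpha\ominus\sT^\ast\sT Q$ generated by this section.

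It then remains to match $v\mapsto\widetilde{\sT}\zeta_\alpha(v,0)$ with the section (\ref{rut:7}) defining $\mathcal{R}$. Using Definition \ref{def:1}, for any $p\in C_\alpha$ over $\tau_Q(v)$, a section $\sigma$ representing $p_\alpha=[p]$ corresponds to a function $f_\sigma$ on $Q$ with $\langle \sd f_\sigma, X\rangle=\alpha$; then $\langle p_\alpha,\sT\zeta(v)\rangle$ is represented by $\sT\sigma(\sT\zeta(v))$, while $\langle p,v\rangle=\langle \sd f_\sigma,v\rangle$ contributes the correction that exactly compensates the change of representative in $Z_\alpha$. A verification in adapted coordinates $(y,x^i)$ with $X=\partial_y$, in which $\zeta_\alpha(y,x^i,r)=(x^i,r-\alpha y)$ so that $\widetilde{\sT}\zeta_\alpha(v,0)=(x^i,\dot x^j,-\alpha\dot y)$, and in which (\ref{rut:7}) likewise collapses to $(x^i,\dot x^j,-\alpha\dot y)$ after cancellation of the $p_j$-terms, confirms the equality. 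The main obstacle is precisely this intrinsic identification: one must carefully track how the $\R$-quotient defining $\widetilde{\sT}Z_\alpha$ and the $\ominus$-quotient defining the graph combine so that the apparent $p$-dependence in (\ref{rut:7}) cancels against the $(\R,+)$-equivariance of $\widetilde{\sT}\zeta_\alpha$, yielding a genuine section that coincides with the graph of the reduced tangent.
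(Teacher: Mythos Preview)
Your proposal is correct and follows essentially the same line as the paper: identify $\widetilde{\sT}(Q\times\R)$ with $\sT Q\times\R$, realise $\mathrm{graph}(\widetilde{\sT}\zeta_\alpha)$ as the section $v\mapsto\widetilde{\sT}\zeta_\alpha(v,0)$ of $\widetilde{\sT}Z_\alpha\ominus\widetilde{\sT}(Q\times\R)$, and then show this section coincides with (\ref{rut:7}). The only difference is in the last step: the paper carries out the identification intrinsically, observing directly that $\langle p_\alpha,\sT\zeta(w)\rangle$ is represented by $(w,\langle p,w\rangle)$ under $\sim_{\widetilde\sT}$ so that subtracting $\langle p,w\rangle$ gives the representative $(w,0)$, whereas your intrinsic sketch via $f_\sigma$ is left a bit imprecise and you complete the argument with the coordinate check instead.
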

\begin{proof}
Recall that $\zeta_\alpha:Q\times \R\rightarrow Z_\alpha$ is a natural projection given by the equivalence relation (\ref{rut:1b}). As an {\it av}\,-bundle morphism it covers the map $\zeta: Q\rightarrow Q_X$. It follows that the tangent map $\sT\zeta_\alpha$ is given by the tangent equivalence relation in $\sT(Q\times\R)\simeq \sT Q\times \sT\R$:
\begin{multline*}
(v_1,r_1,\dot r_1)\sim_{\sT}(v_2,r_2,\dot r_2)\quad\text{if}\quad v_2=\sT\varphi_s(v_1)+\dot sX(\tau_Q(v_2)),\\ r_2=r_1+\alpha s, \; \dot r_2=\dot r_1+\alpha\dot s.
\end{multline*}
for some $s,\dot s\in \R$. Similarly, the reduced tangent relation $\widetilde{\sT}\zeta_\alpha: \widetilde{\sT}(Q\times\R)\rightarrow\widetilde{\sT}Z_\alpha $ is given by the equivalence relation in
$\widetilde{\sT}(Q\times\R)=\sT Q\times \R$:
$$(v_1,t_1)\sim_{\widetilde\sT}(v_2, t_2)\quad\text{if}\quad v_2=\sT\varphi_s(v_1)+\dot s X(\tau_Q(v_2)), \; t_2=t_1+\alpha\dot s$$
for some $s,\dot s\in \R$. Let us note that $\widetilde{\sT}\zeta_\alpha$ is a morphism of {\it av}\,-bundles.

The graph of $\widetilde{\sT}\zeta_\alpha$ is a section of $\widetilde{\sT}Z_\alpha\ominus \widetilde{\sT}(Q\times\R)$ over the graph of projection $\zeta$. We shall use the obvious identification of $\widetilde{\sT}Z_\alpha\ominus \widetilde{\sT}(Q\times\R)$ with $\widetilde{\sT}Z_\alpha\times \sT Q$, according to which we identify an element $(\tilde{v},w)\in\widetilde{\sT}Z_\alpha\times \sT Q$ with the class $[\tilde{v}, (w,0)]$ in $\widetilde{\sT}Z_\alpha\ominus \widetilde{\sT}(Q\times\R)$. On the other hand $\mathrm{graph}(\widetilde{\sT}\zeta_\alpha)$ consists of elements $[\widetilde{\sT}\zeta_\alpha(w,t), (w,t)]=[\widetilde{\sT}\zeta_\alpha(w,0), (w,0)]$. As a subset of $\widetilde{\sT}Z_\alpha\times \sT Q$ the graph of $\widetilde{\sT}\zeta_\alpha$ consists of pairs $(\widetilde{\sT}\zeta_\alpha(w,0), w)$.

Let us now fix $w\in \sT Q$ and choose any $p\in C_\alpha$ such that $\pi_Q(p)=\tau_Q(w)$. We denote by $p_\alpha$ the element of $P_\alpha\simeq \sP Z_\alpha$ corresponding to $p$.
The affine covector $p_\alpha$ evaluated on $\sT\zeta(w)$ gives an element $\langle\, p_\alpha, \, \sT\zeta(w)\, \rangle\in \widetilde\sT\zeta_\alpha$ which is represented by a pair
$(w,\langle p,w\rangle)\in\sT Q\times \R$ with respect to the relation $\sim_{\widetilde{\sT}}$. It means that the element $\langle\, p_\alpha, \, \sT\zeta(w)\, \rangle -\langle w,p\rangle$ is represented by $(w,\langle w,p\rangle)-\langle p,w\rangle=(w,0)$. We have shown then that
$$\sT\zeta_\alpha(w,0)=\langle\, p_\alpha, \, \sT\zeta(w)\, \rangle -\langle w,p\rangle$$
for any $p\in C_\alpha$ over the appropriate point in $Q$. It is then true that the generating object (\ref{rut:7}) is the same as graph of $\widetilde\sT\zeta_\alpha$.
\end{proof}

The above proposition shows the main difference between Hamiltonian and Lagrangian reductions. Hamiltonian reduction is generated by the zero function on the graph of the underlying relation while the corresponding Lagrangian reduction is generated by the nontrivial section over the graph of the underlying relation. Lagrangian reduction is the reduction `with values'. Since $\widetilde\sT Z_\alpha$ is a not trivial {\it av}\,-bundle over $\sT Q_X$, it is not possible to perform the reduction on the Lagrangian side without the
use of the language the differential calculus of affine values.

\subsection{Examples}\label{sub:ex}

\begin{example}{\rm
A gauge independent formulation of dynamics of a relativistic charged particle can be obtained by extending the configuration space-time $Q$ to the total space $Z$ of a principal $(\R, +)$ fibration $\tau: Z\rightarrow Q$. The electromagnetic potential is then a principal bundle connection form $A$ on $Z$. The Lagrangian of a particle with mass $m$ and charge $e$ is the following gauge invariant function on $\sT Z$
\begin{equation}\label{ex:3}
L(w)=m\sqrt{g(\sT\tau(w), \sT\tau(w))}+e\langle A, w\rangle,
\end{equation}
where $g$ is the Minkowski metric on $Q$. The Lagrangian is invariant with respect to the lift of the action of $(\R,+)$ to $\sT Z$. It is, however, singular and implies constraints $\langle p,v\rangle=e$ in the phase space $\sT^\ast Z$. The Routh reduction makes sense for the parameter $\alpha=e$ only. Indeed, in local coordinates the Routh generating family is the following
\begin{equation}\label{ex:4}
\xymatrix{\widetilde{\sT} Z\ni (x^i, \dot y, \dot x^j)\ar@{|->}[r]\ar[d] & m\sqrt{g_{ij}\dot x^i\dot x^j}+e\dot y-eA_i\dot x^i-\alpha \dot y\in \widetilde{\sT}Z_\alpha\\
\sT Q & }
\end{equation}
The stationary point with respect to $\dot y$ must satisfy $e-\alpha=0$. For $\alpha=e$ we obtain the Routhian which is a section of $\widetilde{\sT}Z_e\rightarrow\sT Q$. In local coordinates the Routhian reads
\begin{equation}\label{ex:5}
(x^i,\dot x^j)\longmapsto (x^i, \dot x^j, m\sqrt{g_{ij}\dot x^i\dot x^j}-eA_i\dot x^i).
\end{equation}
We can provide also the coordinate free interpretation of Routhian. The electromagnetic potential can be represented by an affine one-form which is a section of the affine phase bundle $\sP Z\rightarrow Q$. We shall denote it by $A$ since it contains the same information as an appropriate connection form on $Z$. This form induces an affine one-form $A_\alpha$ i.e. a section of $\sP Z_\alpha\rightarrow Q$. The Routhian (\ref{ex:5}) can be written in the following form
\begin{equation}\label{ex:6}
\sT Q\ni v\longmapsto m\sqrt{g(v,v)}+\langle A_e,v\rangle\in \widetilde{\sT} Z_e.
\end{equation}
The above section is the gauge independent Lagrangian introduced in \cite{TU}.
}\end{example}

\begin{example}{\rm
The Jacobi variational principle for a mechanical system with Lagrangian $L:\sT Q\rightarrow \R$ is a variational principle for the images of motions, i.e., one-dimensional submanifolds of $Q$. A Lagrangian $L_J$ for this principle should therefore be homogeneous, which leads to constraints in the phase space $\sT^\ast Q$. The Hamiltonian generating object for Jacobi system is function equal to zero on the constraints. The obvious choice for the constraints is the level set of the Hamiltonian $H:\sT^\ast Q\rightarrow \R$ corresponding to the initial Lagrangian $L$, i.e.
$$C_E=\{p\in\sT^\ast Q:\; H(p)=E\}.$$
To obtain the Lagrangian $L_J$ of the Jacobi system we perform the Legendre transformation as described in \cite{TU2} in the oposite direction, looking for a Lagrangian corresponding to the Hamiltonian generating object (function equal to zero on $C_E$). The appropriate Lagrangian generating family is
\begin{equation}\label{ex:7}
\xymatrix{C_E\times_Q\sT Q \ar[r]\ar[d] & \R: & (p,v) \ar@{|->}[r] & \langle p,v\rangle \\
\sT Q & }
\end{equation}
To illustrate the procedure we will reduce the above family in the special case of standard Lagrangian $L(v)=\frac12g(v,v)-V(\tau_Q(v))$. The Hamiltonian is then
$H(p)=\frac12g(p,p)+V(\pi_Q(p))$, in local coordinates
$$L(x^i,\dot x^j)=\frac12g_{ij}\dot x^i\dot x^j-V(x^i),\qquad H(x^i,p_j)=\frac12g^{ij}p_ip_j+V(x^i).$$
We find a stationary point of (\ref{ex:7}) using the Lagrange multiplier method. The conditions are then
$$\lambda\frac{\partial H}{\partial p_i}=\dot x^i, \quad H(x^i, p_j)=E.$$
Solving for $\lambda$ and $p_j$ we get
$$\lambda=\pm \sqrt{g_{ij}\dot x^i\dot x^j}\frac{1}{\sqrt{2(E-V(x^i))}}, \qquad p_j=\frac{1}{\lambda} g_{ij}\dot x^i.$$
Finally the reduced Lagrangian generating object reads
\begin{equation}\label{ex:8}
  L_J(v)=\pm\sqrt{2}\sqrt{g(v,v)}\sqrt{E-V(\tau_Q(v))}.
\end{equation}
The choice of the sign is the choice of the orientation of trajectories.

Another approach, proposed by Ba\.za\'nski in \cite{Ba}, makes use of Routh reduction. The first step is to make the initial system homogeneous by adding an extra parameter, i.e. we consider a Lagrangian system on $Q\times\R$ with the Lagrangian
$$L_h: \sT(Q\times \R)\ni (v,s,\dot s)\longmapsto \dot sL\left(\frac{v}{\dot s}\right)\in\R.$$
The function $L_h$ is clearly homogeneous and $s$ is a cyclic variable. Then we find the constraints in $\sT^\ast(Q\times \R)$ due to homogeneity of $L_h$. Denoting by $\lambda:\sT Q\rightarrow \sT^\ast Q$ the Legendre map for the initial Lagrangian we get that $(p,s,\tau)\in \sT^\ast(Q\times \R)$ must satisfy
$$p=\lambda\left(\frac{v}{\dot s}\right), \quad\tau=L\left(\frac{v}{\dot s}\right)-\frac{1}{\dot s}\langle\lambda\left(\frac{v}{\dot s}\right), v\rangle$$
For a hyperregular Lagrangian $L$ we get $\tau=-H(p)$ where $-H$ is the Hamiltonian corresponding to $L$.

The second step is to perform the Routh reduction with the Routh parameter $-E$. We get the Routhian looking for the stationary value with respect to $\dot s$ of the family
$$\xymatrix{\sT Q\times \R\ar[d]\ar[r] & \R : & (v,\dot s)\ar@{|->}[r] & L_h(v,\dot s)+E\dot s\\
\sT Q}$$
The stationary point has to satisfy the condition
$$L\left(\frac{v}{\dot s}\right)-\frac{1}{\dot s}\langle\lambda\left(\frac{v}{\dot s}\right), v\rangle +E=0.$$
We solve this equation for the mechanical Lagrangian getting
$$\dot s=\pm \frac{\sqrt{g(v,v)}}{\sqrt{2(E-V(\tau_Q(v)))}}.$$
The Routhian is then
$$R(v)=\pm  \sqrt{g(v,v)}\sqrt{2}\sqrt{(E-V(\tau_Q(v)))}$$
which is precisely $L_J$ from (\ref{ex:8}).

}\end{example}

\subsection{Reduction to mechanics on an algebroid}\label{sub:alg}

To put Routh reduction in some perspective let us now consider for a while another reduction of the system with symmetry (\ref{rr:1}), namely reduction to mechanics on an algebroid.
We can consider $\zeta: Q\rightarrow Q_X$ as a principal bundle with structure group $(\R, +)$. The condition (\ref{rr:1}) makes it possible to describe the system as Lagrangian system on Atiyah algebroid $\widetilde\sT Q$.

The condition (\ref{rr:1}) means that $L$ is constant on trajectories of $\sd_{\sT}X$, it can be then regarded as the pull-back of a function $\ell$ on $\widetilde\sT Q$. The appropriate diagram is then the following
\be\label{rr:4} \xymatrix@C-15pt@R-5pt{
 & & & & \mathcal{D_\ell}\ar@{ (->}[d] & & & & \\
 & \sT^\ast \widetilde\sT Q^\ast \ar[rrr]^{\tilde\Lambda}  \ar[dr] \ar[ddl]
 & & & \sT \widetilde\sT Q^\ast \ar[dr]\ar[ddl]
 & & & \sT^\ast \widetilde\sT Q\ar[dr]_{\pi_{\sT Q_X}}\ar[ddl]\ar[lll]_{\varepsilon} & \\
 & & \widetilde\sT Q\ar@{.}[rrr]\ar@{.}[ddl] & & & \sT Q\ar@{.}[rrr]\ar@{.}[ddl]
 & & & \widetilde\sT Q \ar@{.}[ddl]\ar@/_1pc/[ul]_{\sd \ell}\ar[dll]_{\lambda}\\
 \widetilde\sT Q^\ast\ar@{.}[rrr]\ar@{.}[dr] \ar@/^1pc/[uur]^{\sd h} & & & \widetilde\sT Q^\ast\ar@{.}[rrr]\ar@{.}[dr]
 & & & \widetilde\sT Q^\ast \ar@{.}[dr] & &  \\
 & Q_X\ar@{.}[rrr]& & & Q_X\ar@{.}[rrr]& & & Q_X &
} \ee

Let us consider relations between (\ref{not:15}) and (\ref{rr:4}). On Lagrangian side, $\sT^\ast \sT Q$ is reduced with respect to the coisotropic submanifold $K=\langle \sd_\sT X\rangle^\circ$. Since $\sd L(\sT Q)\subset K$ and $\sd L(\sT Q)$ is a Lagrangian submanifold it is composed of leaves of characteristic foliation of $K$ (which are trajectories of $\sd_{\sT} X$) and reduces to $\sd\ell(\widetilde\sT Q)\subset \sT^\ast \widetilde\sT Q$.

Whatever happens on the Lagrangian side of the triple has consequences also in the middle i.e. for the dynamics, since $\alpha_Q$ is a symplectomorphism. We get that the dynamics $\mathcal{D}=\alpha_Q^{-1}(\sd L(\sT Q))$ is contained in the coisotropic submanifold $\alpha_Q^{-1}(K)$ of codimension $1$ and again is composed of leaves of characteristic foliation of $K$. Those leaves happen to be trajectories of $\sd_{\sT^\ast} X$. Indeed, we have a following proposition
\begin{prop}\label{prop:1}
Let $\imath_X$ denote the fiberwise linear function on $\sT^\ast Q$  corresponding to the vector field $X$ on $Q$, i.e. $\imath_X(p)=\langle\, p, X(\pi_Q(p))\, \rangle$. Then
$$\alpha_Q^{-1}(K)=\{ v\in\sT\sT^\ast Q:\;\; \langle \sd \imath_x, v\rangle=0\,\}.$$
\end{prop}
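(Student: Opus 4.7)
My plan is to translate both sides of the claimed equality into the Tulczyjew tangent pairing and then verify they coincide. Writing $v=\sT\pi_Q(w)$ for $w\in\sT\sT^\ast Q$, the element $\alpha_Q(w)$ lies in $\sT^\ast_v\sT Q$, so the annihilator condition $\alpha_Q(w)\in K=\langle\sd_\sT X\rangle^\circ$ unfolds to
$$\langle \alpha_Q(w),\,\sd_\sT X(v)\rangle=0.$$

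Next I would rewrite $\sd_\sT X(v)$ using (\ref{not:1}) as $\kappa_Q(\sT X(v))$, so the condition becomes $\langle\alpha_Q(w),\,\kappa_Q(\sT X(v))\rangle=0$. Since $\alpha_Q$ is defined as the dual of $\kappa_Q$ (with respect to the two vector bundle structures on $\sT\sT Q$ described after (\ref{not:12})), pairing $\alpha_Q(w)$ with $\kappa_Q(u)$ equals the Tulczyjew tangent pairing $\langle\!\langle w,u\rangle\!\rangle$ whenever $\sT\pi_Q(w)=\sT\tau_Q(u)$. A short check shows this compatibility holds for $u=\sT X(v)$, so the annihilator condition is equivalent to
$$\langle\!\langle w,\,\sT X(v)\rangle\!\rangle=0.$$

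Now I would expand the right-hand side of the proposition using the definition (\ref{not:3}). Represent $w$ by a curve $t\mapsto p(t)$ in $\sT^\ast Q$ with base curve $t\mapsto q(t)=\pi_Q(p(t))$, so that $\dot q(0)=v$. Then
$$\langle \sd\imath_X,\,w\rangle=\frac{\sd}{\sd t}\Big|_{t=0}\langle p(t),\,X(q(t))\rangle,$$
and the curve $t\mapsto X(q(t))$ in $\sT Q$ has tangent $\sT X(v)$ at $t=0$. By formula (\ref{not:14}) this derivative is exactly $\langle\!\langle w,\sT X(v)\rangle\!\rangle$, matching the condition obtained above.

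The mildly delicate point is the bookkeeping with the two vector bundle structures on $\sT\sT Q$ when invoking the duality between $\alpha_Q$ and $\kappa_Q$; once this is settled the rest is a direct application of the definitions. As a sanity check I would run the computation in the adopted coordinates of (\ref{not:8}) and (\ref{not:10}): both conditions reduce to $\dot p_k X^k(q)+p_j\,\tfrac{\partial X^j}{\partial q^k}\dot q^k=0$ on $(q^i,p_j,\dot q^k,\dot p_l)\in\sT\sT^\ast Q$, confirming the equality of sets.
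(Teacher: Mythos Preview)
Your proposal is correct and follows essentially the same route as the paper's proof: both unfold the annihilator condition $\langle\alpha_Q(w),\sd_\sT X\rangle=0$, use the duality between $\alpha_Q$ and $\kappa_Q$ together with $\sd_\sT X=\kappa_Q\circ\sT X$ to rewrite it as $\langle\!\langle w,\sT X(\sT\pi_Q(w))\rangle\!\rangle=0$, and then identify this tangent pairing via curves with $\langle\sd\imath_X,w\rangle$. Your additional coordinate sanity check is a nice complement but not needed for the argument.
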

\noindent{\bf Proof:} The proof is based on simple calculation. By definition $v\in \alpha_Q^{-1}(K)$ if and only if $\langle\, \alpha_Q(v),\sd_{\sT}X \,\rangle=0$. Let us start from there:
$$0=\langle\, \alpha_Q(v),\sd_{\sT}X \,\rangle=\langle\!\langle\, v, \kappa_Q(\sd_{\sT}X)\,\rangle\!\rangle,$$
Using definition of $\sd_{\sT}X$ we get
$$\langle\!\langle\, v, \sT X(\sT\pi_Q(v))\,\rangle\!\rangle=0.$$
Let us now take any curve $\gamma: I\rightarrow \sT^\ast Q$, such that $\dot\gamma(0)=v$, then it follows from the definition of the pairing that
\begin{equation*}
0=\,\langle\!\langle v, \sT X(\sT\pi_Q(v))\,\rangle\!\rangle=\frac{\sd}{\sd t}_{|t=0}\langle\, \gamma(t), X(\pi_Q(\gamma(t))\,\rangle=
\frac{\sd}{\sd t}_{|t=0}\imath_X(\gamma(t))=\langle\,\sd\imath_X,v\,\rangle.
\end{equation*}
$\Box$

\noindent The above proposition shows that the dynamics $\mathcal{D}$ reduces to the dynamics $\mathcal{D}_\ell$ generated in $\sT\widetilde \sT^\ast Q$ by $\ell$. Similar reduction can be done on Hamiltonian side.

\begin{prop}\label{prop:5} If Lagrangian satisfying (\ref{rr:1}) is hyperregular then the appropriate Hamiltonian is constant on trajectories of $\sd_{\sT^\ast} X$, i.e. satisfies
\be\label{rr:2}
\sd_{\sT^\ast}X(H)=0,
\ee
\end{prop}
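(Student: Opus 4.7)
The plan is to deduce this directly from Proposition \ref{prop:2} by using hyperregularity to express $H$ as an explicit function of the Legendre map and invoking invariance of the natural pairing under the flows. The proof should take only a few lines once the pieces are assembled.

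First, because $L$ is hyperregular, the Legendre map $\lambda:\sT Q\to \sT^\ast Q$ is a global diffeomorphism, so the Hamiltonian generating object simplifies from the family (\ref{not:15a}) to a genuine function. Using the convention in (\ref{not:15a}), I would take
\[
H(p)=\langle\,p,\lambda^{-1}(p)\,\rangle-L(\lambda^{-1}(p)),
\]
so that $H$ is a well-defined smooth function on $\sT^\ast Q$. Recall from Section \ref{sec:not} that the flow of $\sd_{\sT^\ast} X$ is $\Phi_t^\ast=\sT^\ast\varphi_{-t}$, and its defining property is the duality relation $\langle\,\Phi_t^\ast(p),w\,\rangle=\langle\,p,\Phi_{-t}(w)\,\rangle$ for all $w\in\sT_{\varphi_t(\pi_Q(p))}Q$; in particular, taking $w=\Phi_t(v)$ yields the invariance $\langle\,\Phi_t^\ast(p),\Phi_t(v)\,\rangle=\langle\,p,v\,\rangle$.

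Next I would combine these ingredients. By Proposition \ref{prop:2} applied to the hyperregular $L$, $\lambda$ intertwines the two flows, i.e.\ $\Phi_t^\ast\circ\lambda=\lambda\circ\Phi_t$, which in inverse form reads $\lambda^{-1}\circ\Phi_t^\ast=\Phi_t\circ\lambda^{-1}$. Writing $v=\lambda^{-1}(p)$, this gives
\[
\lambda^{-1}(\Phi_t^\ast(p))=\Phi_t(v).
\]
Substituting into the definition of $H$ and using the pairing-invariance together with the symmetry assumption (\ref{rr:1}), which says $L\circ\Phi_t=L$, I obtain
\[
H(\Phi_t^\ast(p))=\langle\,\Phi_t^\ast(p),\Phi_t(v)\,\rangle-L(\Phi_t(v))=\langle\,p,v\,\rangle-L(v)=H(p).
\]
Differentiating at $t=0$ yields $\sd_{\sT^\ast}X(H)=0$, which is (\ref{rr:2}).

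There is essentially no obstacle here; the whole argument is a direct bookkeeping exercise. The one point deserving a moment of care is the sign convention for $H$ (the paper notes the generating object may differ from the Hamiltonian by a sign), but this is irrelevant for the conclusion since the same invariance argument works verbatim for either sign convention. One could also phrase the proof infinitesimally — expressing $H$ via the Morse family (\ref{rut:1}), noting that the family itself is invariant under the simultaneous lifted action of $\sd_\sT X$ on $\sT Q$ and $\sd_{\sT^\ast}X$ on $\sT^\ast Q$, and then invoking that reduction commutes with the symmetry — but the finite-flow calculation above is the cleanest presentation.
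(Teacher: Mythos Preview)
Your proof is correct and follows essentially the same approach as the paper's own argument: define $H$ via the inverse Legendre map, use Proposition~\ref{prop:2} to get $\lambda^{-1}\circ\Phi_t^\ast=\Phi_t\circ\lambda^{-1}$, and then combine the invariance of the pairing under the dual flows with $L\circ\Phi_t=L$ to conclude $H\circ\Phi_t^\ast=H$. The paper writes the pairing step as $\langle\Phi_t^\ast(p),\Phi_t(\lambda^{-1}(p))\rangle=\langle\Phi_{-t}^\ast\Phi_t^\ast(p),\lambda^{-1}(p)\rangle$, which is just your duality relation in slightly different notation.
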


\noindent{\bf Proof:} Let $\lambda:\sT Q\rightarrow \sT^\ast Q$ be the Legendre map associated to $L$. By definition $\lambda(v)=\xi(\sd L(v))$. Alternatively, we can write
$$\langle\, \lambda(v), w\rangle=\frac{\sd}{\sd s}_{|s=0} L(v+sw).$$
Recall that Lagrangian satisfying (\ref{rr:1}) we have $\Phi_t^\ast\circ\lambda=\lambda\circ\Phi_t$ (Proposition \ref{prop:2}). For hyperregular Lagrangians the Legendre map is  invertible and then
\begin{multline*}H(\Phi^\ast_t(p))=L(\lambda^{-1}(\Phi^\ast_t(p)))-\langle \Phi^\ast_t(p)\, ,\,\lambda^{-1}(\Phi^\ast_t(p))\rangle=\\
L(\Phi_t(\lambda^{-1}(p)))-\langle \Phi^\ast_t(p)\, ,\,\Phi_t(\lambda^{-1}(p))\rangle= \\
L(\lambda^{-1}(p))-\langle \Phi^\ast_{-t}\Phi^\ast_t(p)\, ,\,(\lambda^{-1}(p))\rangle=H(p).
\end{multline*}
$\Box$

For Lagrangians that are not hyperregular dynamics is not generated by Hamiltonian function but by family of functions. There is a standard choice of this family, where functions are parameterized by velocities:
$$ F_H:\sT^\ast Q\times_Q\sT Q\longrightarrow \R, \quad F_H(p,v)=L(v)-\langle p,v\rangle.$$
This family is also invariant with respect to $(\R, +)$ action when we use cotangent and tangent lifts in $\sT^\ast Q$ and $\sT Q$ respectively.

Summarizing, mechanical system on $Q$ invariant with respect to $X$ in a sense of (\ref{rr:1}) reduces to the mechanical system on $\widetilde\sT Q$ with phase space $\widetilde\sT^\ast Q$ and phase dynamics being a subset of $\sT \widetilde\sT^\ast Q$. Comparing algebroid reduction with Routh reduction we see that the first one is a step in between the original system and the system `Routh-reduced'. If $n=\dim Q$ then the phase space for the original system is $2n$-dimensional, after algebroid reduction we get ($2n-1$)-dimensional phase space and finally $\sP Z_\alpha$ is ($2n-2$)-dimensional. There is a counterpart of this ``intermediate'' reduction on the Hamiltonian side. It is the symplectic reduction with respect to the coisotropic submanifold $\langle \sd_\sT X\rangle^0\subset \sT^\ast\sT Q$. The reduced manifold is canonically isomorphic to $\sT^\ast\widetilde{\sT}Q$.

\section{(Not really a) generalization}\label{sec:gen}

Let us observe that if a vector field $Y$ on $\sT Q$ is a symmetry of a Lagrangian so is the field $fY$ for any function $f$ on $\sT Q$.  What matters here is the distribution spanned by $Y$, not the field itself. For Routh reduction, however, we use special symmetries i.e. symmetries that are in some sense lifted from the space of positions. Let us then investigate the properties of the distribution $\Delta_X$ spanned by $\sd_{\sT}X$ in more detail to determine assumptions that should be made for a distribution representing such symmetries.

\begin{prop}\label{gen:1} The distribution $\Delta_X$ is a double vector subbundle of $\sT\sT Q$.
\end{prop}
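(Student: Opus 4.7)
The plan is to apply the characterization from \cite{GR} recalled in Section~\ref{sec:not}: a submanifold of a double vector bundle is a double vector subbundle precisely when it is stable under both of the commuting $(\R,\cdot)$-actions (homogeneity structures) associated to the two scalar multiplications. So after naming the two actions on $\sT\sT Q$, I would write an element of $\Delta_X$ in adopted coordinates using (\ref{not:2}), check smoothness of $\Delta_X$ as a submanifold, and verify stability under each of the two actions by a short calculation.

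Concretely, on $\sT\sT Q$ the $\tau_{\sT Q}$-action is ordinary scalar multiplication of tangent vectors, $h^1_t(w)=tw$, whereas the $\sT\tau_Q$-action is the tangent lift of scalar multiplication on $\sT Q$, which in adopted coordinates reads $h^2_s(q,\dot q,\delta q,\delta\dot q)=(q,s\dot q,\delta q,s\delta\dot q)$. A typical element of $\Delta_X$ is
\[
\lambda\,\sd_\sT X(q,\dot q) = \bigl(q^i,\,\dot q^j,\,\lambda X^k(q),\,\lambda\,\tfrac{\partial X^l}{\partial q^m}\dot q^m\bigr),\qquad \lambda\in\R.
\]
Since $X$ is assumed nowhere vanishing, $\sd_\sT X$ is a nowhere vanishing section of $\tau_{\sT Q}$, and the assignment $(v,\lambda)\mapsto \lambda\,\sd_\sT X(v)$ embeds $\sT Q\times\R$ onto $\Delta_X$, which is therefore a smooth submanifold of $\sT\sT Q$.

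Stability under $h^1_t$ is immediate. For $h^2_s$, a one-line coordinate computation gives
\[
h^2_s\bigl(\lambda\,\sd_\sT X(q,\dot q)\bigr)=\bigl(q,\,s\dot q,\,\lambda X(q),\,\lambda\,\tfrac{\partial X}{\partial q}(s\dot q)\bigr)=\lambda\,\sd_\sT X(q,s\dot q)\in\Delta_X,
\]
using linearity of $\tfrac{\partial X}{\partial q}\dot q$ in $\dot q$. Thus both actions preserve $\Delta_X$, which makes $\Delta_X$ a double vector subbundle. The induced structures realise $\Delta_X$ as a rank-$1$ subbundle of $\tau_{\sT Q}$ over all of $\sT Q$, and as a rank-$\dim Q$ subbundle of $\sT\tau_Q$ over the line subbundle $\R\cdot X\subset\sT Q$.

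The only delicate point is the embeddedness of $\Delta_X$ at the zero-section locus $\lambda=0$, and this is precisely what the standing assumption that $X$ vanishes nowhere secures, since the tangential component $\lambda X(q)$ then uniquely determines $\lambda$ and separates the zero section from the rest of $\Delta_X$. Everything else reduces to direct coordinate computation.
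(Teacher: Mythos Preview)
Your argument is correct but proceeds by a different route from the paper. The paper works directly from the definition it states in the proof (``a submanifold which is a subbundle of both vector bundle structures, possibly supported on submanifolds''): it first notes that $\Delta_X$ is a rank-$1$ subbundle of $\tau_{\sT Q}$ because $\sd_\sT X$ is nowhere vanishing, and then, for the second structure, identifies the fibre of $\Delta_X$ over a point $w=aX(q)\in\Delta_0$ as $\kappa_Q\bigl(\sT(aX)(\sT_qQ)\bigr)$, which is a linear subspace since $\sT(aX)$ is linear and $\kappa_Q$ is a double vector bundle morphism. You instead invoke the Grabowski--Rotkiewicz homogeneity criterion and verify invariance of $\Delta_X$ under both scalar multiplications by a two-line coordinate computation.

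Your approach is quicker and entirely mechanical once the criterion is accepted; the paper's approach is more structural and, in passing, produces the explicit fibre description $\kappa_Q(\sT(aX)(\sT Q))$ and the diagram of side bundles and core, the former of which is reused later in Proposition~\ref{gen:8}. One small caveat: Section~\ref{sec:not} only recalls the homogeneity characterization of double vector \emph{bundles}, not of sub\-bundles, so the criterion you cite is a (standard, and indeed in \cite{GR}) corollary rather than something literally stated in the paper.
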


\noindent{\bf Proof:}  Recall that a double vector subbundle is a submanifold of the total space of the double vector bundle, which is a subbundle of both vector bundle structures, possibly supported on submanifolds. For nonvanishing vector field $X$ its lift $\sd_{\sT}X$ is also nonvanishing, so the distribution $\Delta_X$ is one-dimensional subbundle of the bundle $\tau_{\sT Q}$ and $(2n+1)$-dimensional submanifold of $\sT\sT Q$. $\Delta_X$ projects by $\sT\tau_Q$ on the submanifold
$\Delta_0$ which is the distribution on $Q$ spanned by $X$. Let us fix a point $w$ in $\Delta_0$. There exists $a\in\R$ such that $w=aX(q)$. The intersection  $\Delta_X\cap(\sT\tau_Q)^{-1}(w)$ of the distribution with the fibre of tangent projection over $w$ equals  $\kappa_Q(\sT(aX)(\sT Q))$ which is a vector subspace, since $\sT(aX)$ is a linear map and $\kappa_Q$ is a double vector bundle morphism. The distibution $\Delta_X$ is then a subbundle of the tangent projection as well.

Double vector subbundle is a double vector bundle itself, so we can draw the diagram showing both projections and the core of $\Delta_X$.
\begin{equation}\label{gen:2}
\xymatrix{ & \Delta_X\ar[dl]\ar[dr] & \\
\sT Q\ar[dr] & 0_Q\ar[d]\ar@{ (->}[u] & \Delta_0\ar[dl] \\
& Q &
}\qquad
\xymatrix{ & \sT\sT Q\ar[dl]_{\tau_{\sT Q}}\ar[dr]^{\sT\tau_Q} & \\
\sT Q\ar[dr]_{\tau_Q} & \sT Q\ar[d]^{\tau_Q}\ar@{ (->}[u] & \sT Q\ar[dl]^{\tau_Q} \\
& Q &
}
\end{equation}
$\Box$

\noindent It will be useful to have the following definition
\begin{definition}\label{gen:4}
A distribution $\Delta$ on the total space $E$ of a vector bundle $E\rightarrow M$ is called linear if it is a double vector subbundle of $\sT E$.
\end{definition}

\noindent We could now reformulate proposition \ref{gen:1} saying that $\Delta_X$ is a linear distribution on $\sT Q$.

Now we come back to the properties of $\Delta_X$. Let $\Delta_X^+$ be the anihilator of $\Delta_X$ with respect to the tangent structure, i.e.
$$\Delta_X^+=\{w\in \sT\sT^\ast Q: \; \langle\!\langle w,v\rangle\!\rangle=0 \text{ for } v\in\Delta_X, \sT\tau_Q(v)=\sT\pi_Q(w)\,\}$$

\begin{prop}\label{gen:3}
$\Delta_X^+$ is a distribution on $\sT^\ast Q$ spanned by $\sd_{\sT^\ast} X$. As a submanifold it is coisotropic with respect to $\sd_\sT\omega_Q$.
\end{prop}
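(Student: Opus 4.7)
The plan is to prove both parts by passing to local coordinates adapted to $X$, in which every object collapses to its simplest form. Since $X$ is non-vanishing, near any point of $Q$ I can choose coordinates $(y, x^i)$ with $X = \partial/\partial y$; then by (\ref{not:2}) and (\ref{not:5}) the lifts reduce to $\sd_\sT X = \partial_y$ on $\sT Q$ and $\sd_{\sT^\ast} X = \partial_y$ on $\sT^\ast Q$, and $\Delta_0 \subset \sT Q$ is cut out by $\dot x^j = 0$.

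For the identification of $\Delta_X^+$ with $\R \cdot \sd_{\sT^\ast}X$ I would start from the characterization $\langle\!\langle \sd_{\sT^\ast}X(p), \sd_\sT X(v)\rangle\!\rangle = 0$ recalled at the end of Section \ref{sec:not}; combined with the observation that any $v \in \Delta_X$ with $\sT\tau_Q(v) = X(\pi_Q(p))$ is necessarily $\sd_\sT X(u)$ for some $u$ over $\pi_Q(p)$, this yields $\R \cdot \sd_{\sT^\ast}X \subseteq \Delta_X^+$. For the reverse inclusion I would fix $p \in \sT^\ast Q$, parameterize $w \in \sT_p \sT^\ast Q$ by $(\dot y, \dot x^j, \dot p_0, \dot p_j)$, and compute the tangent pairing from (\ref{not:14}): the support condition $\sT\pi_Q(w) \in \Delta_0$ forces $\dot x^j = 0$, and the pairing against $v = \sd_\sT X(u)$ for arbitrary $u \in \sT_{\pi_Q(p)} Q$ evaluates to a non-degenerate linear expression in $\dot p_0$ and $\dot p_j$, whose vanishing forces $\dot p_0 = \dot p_j = 0$ and leaves $w = \dot y \cdot \sd_{\sT^\ast} X(p)$.

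With this description, coisotropy is almost automatic: $\Delta_X^+$ is cut out in $\sT\sT^\ast Q$ by the $2n-1$ functions $\dot x^j$, $\dot p_0$, $\dot p_j$, and the tangent-lifted form $\sd_\sT\omega_Q = \sd p_i \wedge \sd\dot q^i + \sd\dot p_j \wedge \sd q^j$ has conjugate pairs $(\dot q^i, p_i)$ and $(q^i, \dot p_i)$, so the Poisson brackets among any of the defining functions (all involving only $\dot q$- or $\dot p$-components) vanish identically; the ideal is therefore involutive and its zero locus coisotropic. The one place requiring care is the reading of the annihilator definition: it must be interpreted in the double-vector-bundle sense, with the condition $\sT\pi_Q(w) \in \Delta_0$ built in, for otherwise the annihilator would be vacuously far larger than a rank-one distribution.
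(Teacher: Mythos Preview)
Your proof is correct. The identification of $\Delta_X^+$ with the span of $\sd_{\sT^\ast}X$ proceeds along the same lines as the paper's: both show the inclusion using the vanishing of $\langle\!\langle \sd_{\sT^\ast}X(p),\sd_\sT X(v)\rangle\!\rangle$ and then argue the rank is one --- the paper via a dimension count from the double vector bundle structure, you via an explicit coordinate computation. (A minor imprecision: for a general $w$ with $\sT\pi_Q(w)=aX(q)$ the elements of $\Delta_X$ you must pair against are $a\,\sd_\sT X(u)$, not $\sd_\sT X(u)$ itself; your coordinate formula already handles this, but the verbal description slips.) The coisotropy arguments are genuinely different. You verify in adapted coordinates that the defining functions $\dot x^j,\dot s,\dot p_j$ pairwise Poisson-commute for $\sd_\sT\omega_Q$, which is direct and entirely sufficient. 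The paper instead proves a small coordinate-free lemma: for any closed non-vanishing one-form $\varphi$ on a manifold $M$, the cone $N=\{a\varphi(x):a\in\R\}\subset\sT^\ast M$ is coisotropic (because $\sT_pN$ contains the Lagrangian tangent space of the graph of $a\varphi$); applying this with $M=\sT^\ast Q$ and $\varphi=\sd\imath_X$ and transporting via the symplectomorphism $\beta_Q$ gives the result. Your route is more elementary; the paper's isolates a reusable geometric fact and avoids any choice of chart.
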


\noindent{\bf Proof:} Since $\Delta_X$ is a linear distribution so is $\Delta_X^+$. The appropriate diagram reads
\begin{equation}\label{gen:7}
\xymatrix{ & \Delta_X^+\ar[dl]\ar[dr] & \\
\sT^\ast Q\ar[dr] & 0_Q\ar[d]\ar@{ (->}[u] & \Delta_0\ar[dl] \\
& Q &
}\qquad
\xymatrix{ & \sT\sT^\ast Q\ar[dl]_{\tau_{\sT^\ast Q}}\ar[dr]^{\sT\pi_Q} & \\
\sT^\ast Q\ar[dr]_{\pi_Q} & \sT^\ast Q\ar[d]^{\pi_Q}\ar@{ (->}[u] & \sT Q\ar[dl]^{\tau_Q} \\
& Q &
}
\end{equation}
The projection of $\Delta_X^+$ by $\tau_{\sT^\ast Q}$ is the anihilator of the core of $\Delta_X$, so it is indeed the whole $\sT^\ast Q$. The dimension of the fibre of $\Delta_X$ over fixed $w\in \Delta_0$ equals $n$, so its anihilator is also $n$ dimensional subspace in the fibre of $\sT\pi_Q$ over $w$. Since $\Delta_0$ as a submanifold is $(n+1)$-dimensional, $\Delta_X^+$ is $(2n+1)$-dimensional submanifold in $\sT\sT^\ast Q$. This shows that it is one-dimensional distribution on $\sT^\ast Q$.

For a covector $p$ and vector $v$ over the same point in $Q$ we take integral curves of $\sd_{\sT^\ast} X$ and $\sd_{\sT} X$ respectively:
$$t\longmapsto \eta(t)=\sT^\ast\phi_t(p)\in\sT^\ast Q\qquad t\longmapsto \gamma(t)=\sT\phi_t(v)\in\sT Q.$$
Curves $\eta$ and $\gamma$ cover the same curve in $Q$, we can then calculate the pairing of $\eta(t)$ and $\gamma(t)$. It follows from the definition of flows of both fields that the function
$$t\longmapsto\langle\eta(t), \gamma(t)\rangle=\langle p,v\rangle$$
is constant. We get that
$$\langle\!\langle\sd_{\sT^\ast} X(p),\sd_{\sT} X(v) \rangle\!\rangle=\frac{d}{dt}\langle\eta(\cdot), \gamma(\cdot)\rangle_{|t=0}=0$$
It follows that the image of $\sd_{\sT^\ast} X$ lies in $\Delta_X^+$. $\sd_{\sT^\ast} X$ is a nonvanishing vector field, so it spans the distribution.

Let us now consider a closed, nonvanishing smooth form $\varphi$ on a manifold $M$ an define $N=\{p\in\sT^\ast M: \, p=a\varphi(x), a\in \R\}$. $N$ is $(m+1)$-submanifold of
$\sT^\ast M$ for $m=\dim M$. We will show that $N$ is coisotropic. Let $p=a\varphi(x)$. The tangent space $\sT_p N$ contains $\sT_p(a\varphi(M))$ as a subspace. The form $a\varphi$ is closed, so $a\varphi(M)$ is Lagrangian in $\sT^\ast M$. It follows that
$$(\sT_p N)^\S\subset (\sT_p(a\varphi(M)))^\S=\sT_p(a\varphi(M))\subset\sT_p N,$$
which means that $N$ is coisotropic. Taking $M=\sT^\ast Q$ and $\varphi=\sd\imath_X$ we conclude that the submanifold `spanned' by $\sd\imath_X$ is coisotropic. The distribution $\Delta_X^+$ is the image of this submanifold by the symplectomorphism $\beta_M$ so it is coisotropic as well.$\Box$

Following propositions \ref{gen:1} and \ref{gen:3} we should assume for the distribution $\Delta$ on $\sT Q$ generalizing $\sd_{\sT}X$ to be one-dimensional, linear,
and such that $\Delta^+$ is coisotropic with respect to $\sd_\sT\omega_Q$. Additional assumption is that $\Delta$ is not vertical, i.e. not contained in $\sV\sT Q$. It means that it projects on a distribution $\Delta_0$ on $Q$. Let $\Delta^\S$ be the symplectic anihilator of $\Delta^+$ (with respect to $\omega_Q$).

\begin{prop}\label{gen:5} With the above assumptions for $\Delta$, the distribution $\Delta^\S$ is linear, integrable and of codimension $1$.
\end{prop}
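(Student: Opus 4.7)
First, I would show that $\Delta^+$ is a rank-$1$ subbundle of $\tau_{\sT^\ast Q}$, locally spanned by a non-vanishing vector field $Y$ on $\sT^\ast Q$ (in the model case $\Delta = \langle\sd_\sT X\rangle$ one has $Y = \sd_{\sT^\ast} X$). This is a dimension count in the double vector bundle diagram: the annihilator of $\Delta$ under the $\sT\tau_Q$--$\sT\pi_Q$ duality has rank $n$ over $\Delta_0 \subset \sT Q$, which rearranges to rank one over the other side $\tau_{\sT^\ast Q}$, exactly as carried out for $\Delta_X^+$ in the proof of Proposition \ref{gen:3}. The codimension-one claim for $\Delta^\S$ then follows from non-degeneracy of $\omega_Q$.

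For linearity, the map $\beta_Q$ is a morphism of double vector bundles, so it sends the double vector subbundle $\Delta^+$ to a double vector subbundle $\beta_Q(\Delta^+) \subset \sT^\ast\sT^\ast Q$. The symplectic orthogonal $\Delta^\S$ coincides with the canonical annihilator of $\beta_Q(\Delta^+)$ under the $\sT\sT^\ast Q$--$\sT^\ast\sT^\ast Q$ duality over $\sT^\ast Q$, and annihilation respects both vector bundle structures, so $\Delta^\S$ is a linear distribution in the sense of Definition \ref{gen:4}.

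The crux is integrability. Set $\eta := \iota_Y \omega_Q$, a $1$-form on $\sT^\ast Q$. Then $\beta_Q(\Delta^+)$ is exactly the image of the section $a\eta$ of $\sT^\ast\sT^\ast Q$---the ``line submanifold through $\eta$''---while by construction $\Delta^\S = \ker \eta$. Since $\beta_Q$ intertwines $\sd_\sT \omega_Q$ with $\omega_{\sT^\ast Q}$, the coisotropy hypothesis on $\Delta^+$ translates into coisotropy of this line submanifold in $(\sT^\ast\sT^\ast Q, \omega_{\sT^\ast Q})$. I would then prove the pointwise lemma: if the line submanifold $\{a\eta(x) : x \in M, a \in \R\} \subset \sT^\ast M$ is coisotropic with respect to $\omega_M$, then $\eta \wedge \sd\eta = 0$. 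This reduces to a rank computation on the restriction of $\omega_M$ to the parameterization $(x,a) \mapsto a\eta(x)$, yielding a block antisymmetric matrix whose upper-left block is $a[\sd\eta]$ and whose last row/column is $\pm\eta$; coisotropy forces this matrix to have rank at most two, which is equivalent to $\eta \wedge \sd\eta = 0$. Since $\omega_Q$ is closed, $\sd\eta = \mathcal{L}_Y \omega_Q$, and Frobenius applied to $\eta \wedge \sd\eta = 0$ yields involutivity of $\Delta^\S = \ker\eta$.

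The main obstacle is exactly this pointwise lemma---extracting $\eta \wedge \sd\eta = 0$ from coisotropy of the line submanifold. The analogous implication in the case $\eta$ closed is essentially the content of the final paragraph of the proof of Proposition \ref{gen:3}; what I need here is the converse direction, and while the matrix computation is elementary, it is the only genuinely non-formal step in the argument.
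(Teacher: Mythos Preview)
Your proposal is correct and structurally identical to the paper's argument: write $\Delta^\S$ as the annihilator of $\beta_Q(\Delta^+)\subset\sT^\ast\sT^\ast Q$, read off codimension one from non-degeneracy of $\omega_Q$, and obtain linearity from the fact that $\beta_Q$ is a double vector bundle morphism and that annihilation preserves double vector subbundles.

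The only point of divergence is the integrability step. The paper settles it in a single sentence by invoking the standard symplectic fact that a vector subbundle $N\subset\sT^\ast M$ is coisotropic (for the canonical form) if and only if its annihilator $N^\circ\subset\sT M$ is an involutive distribution; since $\beta_Q$ is a symplectomorphism carrying $\sd_\sT\omega_Q$ to $\omega_{\sT^\ast Q}$, the coisotropy hypothesis on $\Delta^+$ transfers to $\beta_Q(\Delta^+)$, and integrability of $\Delta^\S=(\beta_Q(\Delta^+))^\circ$ follows immediately. Your route through $\eta=\iota_Y\omega_Q$, the rank computation on the pulled-back form $\sd a\wedge\eta+a\,\sd\eta$, and the Frobenius condition $\eta\wedge\sd\eta=0$ is simply an explicit, self-contained proof of the rank-one case of that same general fact. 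It is correct (indeed $(\sd a\wedge\eta+a\,\sd\eta)^2=2a\,\sd a\wedge\eta\wedge\sd\eta+a^2(\sd\eta)^2$, so rank $\leq 2$ forces $\eta\wedge\sd\eta=0$), but it is more work than the paper needs, since the general statement is well known and applies directly.
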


\noindent{\bf Proof:} By definition $\Delta^\S=\beta_M(\Delta^+)^\circ$. The anihilator is taken with respect to the vector bundle structure $\pi_{\sT^\ast M}:\sT^\ast\sT^\ast M\rightarrow \sT^\ast M$. Since $\beta_M(\Delta^+)$ is coisotropic, $\Delta^\S$ is integrable. The distribution $\Delta^+$ is one dimensional, therefore its symplectic anihilator is of codimension 1. Linearity of $\Delta^\S$ follows from the fact that $\beta_M$ is a double vector bundle morphism and that anihilator of a double vector subbundle is a double vector subbundle of the dual. $\Box$

Finally, we have the following proposition

\begin{prop}\label{gen:6} Let $C$ be an integral submanifold of $\Delta^\S$ not intersecting zero-section. Locally, there exists unique linear function $h$ on
$\sT^\ast Q$ such that $h_{|C}=1$ and $\Delta=\Delta_X$ for $X$ being the vector field on $Q$ corresponding to $h$.
\end{prop}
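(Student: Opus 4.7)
The plan is to invoke Proposition~\ref{gen:5}: $\Delta^\S$ is a codimension-one, linear and integrable distribution on $\sT^\ast Q$, with $C$ as one of its leaves. I aim to build a fiberwise-linear local defining function $h$ for $\Delta^\S$ with $h|_C=1$; because such an $h$ is automatically of the form $h=\imath_X$ for a unique vector field $X$ on $Q$, the conclusion $\Delta=\Delta_X$ will then follow from Proposition~\ref{gen:3} together with the tangent pairing duality.

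For the construction of $h$, I would exploit the fact that $\Delta^\S$ is a double vector subbundle of $\sT\sT^\ast Q$, which in particular means it is preserved by the tangent-lifted fiberwise scaling $p\mapsto tp$ on $\sT^\ast Q$. This forces the leaves of $\Delta^\S$ to be permuted by scaling and, locally, $\Delta^\S$ to be the kernel of a 1-form on $\sT^\ast Q$ whose coefficients are linear in the fiber coordinates $p_j$. Since $C$ misses the zero-section, the intersection $C\cap\sT^\ast_q Q$ is a nonempty affine hyperplane off the origin on each nearby fiber; define $h|_{\sT^\ast_q Q}$ as the unique linear functional on that fiber taking the value $1$ on this hyperplane. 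Smoothness of $C$ yields smooth dependence on $q$, so $h$ is a smooth fiberwise-linear function on $\sT^\ast Q$ with $h|_C=1$ and $\ker\sd h=\Delta^\S$, and hence $h=\imath_X$ for a unique vector field $X$ on $Q$.

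To conclude $\Delta=\Delta_X$: by Proposition~\ref{gen:3}, $\Delta_X^+$ is the line distribution spanned by $\sd_{\sT^\ast}X=\beta_Q^{-1}(\sd\imath_X)=\beta_Q^{-1}(\sd h)$, so $\beta_Q(\Delta_X^+)$ is the $\R$-span of $\sd h$ in $\sT^\ast\sT^\ast Q$. On the other hand, by construction $\beta_Q(\Delta^+)^\circ=\Delta^\S=\ker\sd h$, so $\beta_Q(\Delta^+)$ is also the $\R$-span of $\sd h$, whence $\Delta^+=\Delta_X^+$. Taking $+$-annihilators once more and invoking non-degeneracy of the tangent pairing between $\sT\sT^\ast Q$ and $\sT\sT Q$ over $\sT Q$ recovers $\Delta=(\Delta^+)^+=(\Delta_X^+)^+=\Delta_X$. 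Uniqueness of $h$ is then immediate: two fiberwise-linear functions with the same kernel distribution and both equal to $1$ on $C$ restrict on each fiber to linear functionals agreeing on an affine hyperplane off the origin, so they coincide.

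The main obstacle is the first step: translating the double vector subbundle structure of $\Delta^\S$ into the concrete statement that its leaves are exactly the level sets of a fiberwise-linear function on $\sT^\ast Q$. The hypothesis that $C$ avoids the zero-section is essential here---it provides the normalization $h|_C=1$ that fixes the additive-constant ambiguity which integrability alone would leave open, and ensures that the fiberwise slices of $C$ are genuine affine hyperplanes not through the origin.
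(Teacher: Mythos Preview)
Your proposal is correct and follows essentially the same route as the paper: identify $C$ as an affine subbundle of $\sT^\ast Q$ of fiberwise codimension one, produce the fiberwise-linear function $h=\imath_X$ normalized by $h|_C=1$, and then use that the Hamiltonian vector field of $h$ spans $\Delta^+$ to conclude $\Delta=\Delta_X$. The only substantive difference is in how the ``affine hyperplane'' step is justified: the paper explicitly computes the vertical part $\sV\sT^\ast Q\cap\Delta^\S\simeq\sT^\ast Q\times_Q(\Delta_0)^\circ$ from the double vector bundle diagram for $\Delta^\S$ (whose core is $(\Delta_0)^\circ$), which immediately gives the model for the fiber slices of $C$, whereas you argue more abstractly via invariance under the tangent-lifted scaling. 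Both are valid; the paper's computation is what makes your ``main obstacle'' concrete.
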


\noindent{\bf Proof:} We assume that $\Delta$ is a one-dimensional, linear and not vertical distribution on $\sT Q$ such that $\Delta^+$ is coisotropic. As previously by $\Delta_0$ we denote $\sT\tau_Q(\Delta)$ which is an one-dimensional distribution on $Q$. We have the following diagrams for $\Delta^+\subset \sT\sT^\ast Q$ and
$\beta_M(\Delta^+)\subset \sT^\ast\sT^\ast Q$
$$\xymatrix{ & \Delta_X^+\ar[dl]\ar[dr] & \\
\sT^\ast Q\ar[dr] & 0_Q\ar[d]\ar@{ (->}[u] & \Delta_0\ar[dl] \\
& Q &
}\qquad
\xymatrix{ & \beta_M(\Delta_X^+)\ar[dl]\ar[dr] & \\
\sT^\ast Q\ar[dr] & 0_Q\ar[d]\ar@{ (->}[u] & \Delta_0\ar[dl] \\
& Q &
}$$
To get $\Delta^\S$ we take the anihilator with respect to the left-hand-side projection. Following the rules for double vector bundles we get the diagram for $\Delta^\S\subset\sT\sT^\ast Q$
$$\xymatrix{ & \Delta^\S\ar[dl]\ar[dr] & \\
\sT^\ast Q\ar[dr] & (\Delta_0)^\circ\ar[d]\ar@{ (->}[u] & \sT Q\ar[dl] \\
& Q &
}$$
Let us discuss the vertical part of $\Delta^\S$, i.e. $\sV\sT^\ast Q\cap \Delta^\S$. Since $\sT^\ast Q\rightarrow Q$ is a vector bundle, its vertical tangent bundle has some additional internal structure, i.e. $\sV\sT^\ast Q\simeq \sT^\ast Q\times_Q\sT^\ast Q$. It is easy to see now that $\sV\sT^\ast Q\cap \Delta^\S\simeq \sT^\ast Q\times_Q(\Delta_0)^\circ$. It follows that the intersection of the leaf $C$ of the foliation defined by $\Delta^\S$ with $\sT^\ast_x Q$ for a given $x\in Q$ is an affine subspace modelled on $(\Delta_0)^\circ$. If the intersection contains zero covector it should be $(\Delta_0)^\circ$ itself. Let us fix a leaf $C$ that does not contain zero. It is then the codimension one affine subbundle of the cotangent bundle modeled on the anihilator of $\Delta_0$. Locally there exist a linear function $h$ on $\sT^\ast Q$ such that $h_{|C}=1$. The differential $dh$ vanishes on $\Delta^\S$, so the hamiltonian vector field generated by $X_h$ belongs to $\Delta^+$. A linear function on $\sT^\ast Q$ defines a vector field $X$ on $Q$ such that $h(p)=\langle p,X(\pi_Q(p))\rangle$. We know that then $X_h=\sd_{\sT^\ast}X$. Since $h$ is not zero then $X$ is nonvanishing and $\sd_{\sT^\ast}X$ is also nonvanishing. W know that $\Delta^+$is one-dimensional, therefore it is spanned by $\sd_{\sT^\ast}X$. It is clear now that $\Delta$ is spanned by $\sd_{\sT}X$. $\Box$

From the Proposition \ref{gen:6} we see that replacing a vector field $X$ with the distribution $\Delta$ as an object encoding symmetry is not really a generalization,
since every one-dimensional, non-vertical, linear distribution with the assumption that $\Delta^+$ is coisotropic is spanned by the lift of a vector field. In the proof above we have chosen a function $h$ such that assumes value $1$ on $C$. We could of course have chosen any non-zero value rescaling our vector field $X$, so in fact the vector field is  determined up to the multiplication by constant. This means that with $\Delta$ as our main object we are almost back in the old situation with $X$ but without the distinguished parameterization of leaves of the foliation given by $\Delta^\S$.

For the sake of completeness let us translate the assumption that $\Delta^+$ is coisotropic into the language of tangent bundle, i.e. using only the structure of $\sT\sT Q$.

\begin{prop}\label{gen:8} Let $\Delta$ be a linear one-dimensional and not vertical distribution on $\sT Q$. $\Delta^+$ is coisotropic if and only if $\kappa_Q(\Delta)$ is an integrable distribution on $\Delta_0$.
\end{prop}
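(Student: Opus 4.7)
The plan is to push the coisotropy of $\Delta^+$ across the symplectomorphism $\alpha_Q$ into the cotangent bundle $\sT^\ast\sT Q$, where it becomes a Frobenius integrability statement about $\kappa_Q(\Delta)$. The key identity is
$$
\langle\!\langle w,u\rangle\!\rangle \;=\; \langle\alpha_Q(w),\kappa_Q(u)\rangle
$$
for $w\in\sT\sT^\ast Q$ and $u\in\sT\sT Q$ with $\sT\pi_Q(w)=\sT\tau_Q(u)$, where the right-hand pairing is the canonical evaluation of $\sT^\ast\sT Q$ on $\sT\sT Q$ over $\sT Q$ (via $\pi_{\sT Q}$ and $\tau_{\sT Q}$). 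This is just the duality that defines $\alpha_Q$ as the double vector bundle dual of $\kappa_Q$, and is confirmed at a glance by (\ref{not:10}) and (\ref{not:12}). Consequently $\alpha_Q(\Delta^+)$ is the partial annihilator
$$
\bigl\{\varphi\in\sT^\ast\sT Q : \pi_{\sT Q}(\varphi)\in\Delta_0,\ \varphi|_{\kappa_Q(\Delta)_{\pi_{\sT Q}(\varphi)}}=0\bigr\}
$$
of $\kappa_Q(\Delta)$ along $\Delta_0=\sT\tau_Q(\Delta)$; since $\alpha_Q$ intertwines $\sd_\sT\omega_Q$ with $\omega_{\sT Q}$, the submanifold $\Delta^+$ is coisotropic if and only if this partial annihilator is.

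What remains is a Frobenius-type lemma in the cotangent bundle: for a submanifold $N\subset M$ and a subbundle $D\subset\sT M|_N$, the partial annihilator
$$
\{\varphi\in\sT^\ast M : \pi_M(\varphi)\in N,\ \iota_V\varphi=0\text{ for all }V\in D_{\pi_M(\varphi)}\}
$$
is coisotropic if and only if $D\subset\sT N$ and $D$ is involutive as a distribution on $N$. This is a short Poisson-bracket computation in coordinates $(x^i,y^a)$ adapted so that $N=\{y=0\}$, with a local frame $V_\mu=a_\mu^i\partial_{x^i}+b_\mu^a\partial_{y^a}$ of $D$: the brackets $\{y^a,\iota_{V_\mu}\,\cdot\,\}$ reduce to $-b_\mu^a$, and their vanishing on the constraint surface forces $b_\mu^a|_{y=0}=0$; once tangency holds, $\{\iota_{V_\mu}\,\cdot\,,\iota_{V_\nu}\,\cdot\,\}$ reduces to $\iota_{[V_\mu,V_\nu]}\,\cdot\,$, and its vanishing on the constraint surface is exactly the involutivity of $D$. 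Applied with $M=\sT Q$, $N=\Delta_0$, $D=\kappa_Q(\Delta)$, the lemma delivers the proposition.

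The main delicate point is the partial character of the annihilator: because $\kappa_Q(\Delta)$ is supported only over the submanifold $\Delta_0\subset\sT Q$, the coisotropy test couples the constraints that cut out $\Delta_0$ with those that cut out $D$ fibrewise, and one must verify that their Poisson brackets separate cleanly into a tangency contribution and an involutivity contribution rather than mixing. Once this is pinned down, the statement of Proposition \ref{gen:8} is immediate from the symplectomorphism $\alpha_Q$.
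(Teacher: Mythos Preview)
Your proof is correct and shares with the paper the central identification $\alpha_Q(\Delta^+)=\kappa_Q(\Delta)^\circ$, the partial annihilator in $\sT^\ast\sT Q$ supported over $\Delta_0$. Where you diverge is in the bridge between coisotropy of this annihilator and integrability of $\kappa_Q(\Delta)$. The paper treats the two implications asymmetrically: for the forward direction it invokes Proposition~\ref{gen:6} to produce a vector field $X$ with $\Delta=\Delta_X$, and then reads off tangency and integrability of $\kappa_Q(\Delta)$ explicitly, identifying the leaves as the images of the sections $aX$; for the converse it passes to $\sT^\ast\Delta_0$, uses that the annihilator of an integrable distribution is coisotropic there, and pulls the coisotropy back through the symplectic relation $\sT^\ast\sT Q\to\sT^\ast\Delta_0$. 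Your Frobenius-type lemma about partial annihilators in a cotangent bundle handles both directions at once via a direct constraint/Poisson-bracket computation, and in particular does not rely on the earlier Proposition~\ref{gen:6}. This is more symmetric and self-contained, and it also surfaces the tangency condition $\kappa_Q(\Delta)\subset\sT\Delta_0$ as an automatic by-product. The paper's route, by contrast, makes the geometry of the integral leaves visible and ties the result into the surrounding narrative about recovering $X$ from $\Delta$.
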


\noindent{\bf Proof:} If $\Delta^+$ is coisotropic then we are in the situation described in  Proposition \ref{gen:6} which means that $\Delta$ is spanned by the tangent lift $\sd_{\sT} X$ of some non-vanishing vector field $X$ on $Q$. This vector field spans $\Delta_0$. First lets check if $\kappa_Q(\Delta)$ is tangent do $\Delta_0$ at all. Any element of $\Delta$ is of the form $w=a\sd_{\sT} X(v)$ for some $a\in\R$ and $v\in\sT Q$. Lifting a vector field is a linear process so we can write $w=\sd_{\sT} aX(v)$ as well, i.e. first multiply a vector field by a number and then lift. By definition of the lift we have that $w=\kappa_Q(\sT(aX)(v))$. Applying $\kappa_Q$ again we get $\kappa_Q(w)=\sT(aX)(v)$ which means that $\kappa_Q(w)$ is tangent to the image of the vector field $aX$ at the point $aX(\tau_Q(v))$. Since $aX$ is a section of $\Delta_0$, $\kappa_Q(w)$ is definitely tangent to $\Delta_0$. Moreover, the distribution $\kappa_Q(\Delta)$ at $aX(x)$ is just equal $\sT (aX)(\sT_xQ)$. We see then that $\kappa_Q(\Delta)$ treated as a distribution on $\Delta_0$ is integrable. The integral submanifolds are locally parameterized by real numbers and equal to the images of $aX$.

Now we assume that $\kappa_Q(\Delta)$ is integrable on $\Delta_0$. It means that its anihilator is coisotropic in $\sT^\ast \Delta_0$. The anihilator $\kappa_Q(\Delta)^\circ$ of $\kappa_Q(\Delta)$ in $\sT^\ast\sT Q$ is also coisotropic as a pre-image of a coisotropic submanifold by symplectic relation. It is easy to check that
$\alpha_Q(\Delta^+)=\kappa_Q(\Delta)^\circ$. Indeed, a covector $\varphi\in\sT^\ast\sT Q$ belongs to $\kappa_Q(\Delta)^\circ$ if it projects on $v_0\in\Delta_0$ and for all $v\in \Delta$ such that $\sT\tau_Q(v)=v_0$ we have $\langle \varphi,\kappa_Q(v)\rangle=0$. Using the definition of $\alpha_Q$ we can say that
The map $\alpha_Q$ is a diffeomorphism, therefore we can write that $\alpha_Q(w)\in \kappa_Q(\Delta)^\circ$ if and only if for all $v$ as before
$$0=\langle \alpha_Q(w),\kappa_Q(v)\rangle=\langle\!\langle w,\kappa_Q\circ\kappa_Q(v)\rangle\!\rangle=
\langle\!\langle w,v\rangle\!\rangle$$
The condition $\langle\!\langle w,v\rangle\!\rangle=0$ means that $w\in \Delta^+$. Since $\alpha_Q$ is a symplectomorphism we conclude that $\Delta^+$ is coisotropic. $\Box$

Thus we have, associated with $\Delta$, two local foliations: integral submanifolds of $\Delta^+{}^\S\subset \sT\sT^\ast Q$ and integral submanifolds of $\kappa_Q(\Delta)\subset\sT L$. The first one is a family $C_a$ of affine subbundles of $\sT^\ast Q$. The second is a family $\{X_b\}$ of vector fields on $Q$. Two vector fields of the family differ by the multiplicative constant. Any vector field from the family $\{X_b\}$ defines a function on $\sT^\ast Q$ with $\{C_a\}$ as level sets. The symplectic reduction $P_a$ of $C_a$ is an affine bundle modelled on $\sT^ast Q_X$. For given $X_b$, $P_a$ can be identified with $\sP Z_{a,b}$ where $Z_{a,b}$ is defined by an equivalence relation in $Q\times \R$
$$(q,t)\sim (\varphi_{X_b,s}(q), t+\alpha_{a,b}s)$$
where
$\alpha_{a,b}=\langle p, X_b\rangle$ for $p\in C_a$. Now, we replace $X_b$ by $X_{b'}=cX_{b}$ and we get new equivalence relation
$$(q,t)\sim (\varphi_{X_{b'},s}(q), t+\alpha_{a,b'}s)$$
Since $X_{b'}=cX_{b}$ we have $\varphi_{X_{b'},s}=\varphi_{X_{b},cs}$ and the above relation assumes the form
$$(q,t)\sim (\varphi_{X_{b},cs}(q), t+c\alpha_{a,b}s)$$
which is equivalent to the first one, i.e., $Z_{ab}=Z_{ab'}$.

\section{Conclusions}\label{sec:con}
We have presented the general geometric theory of Routh reduction for mechanical systems with one cyclic variable. The problem of many cyclic variables, e.g. systems invariant with respect to group action on the configuration manifold, or, more generally, systems invariant with respect to more than one-dimensional distribution on the configuration manifold (see e.g. \cite{CM,LGA}), we postpone to further publications.  There is one more important line of study that should be pursued from geometric point of view, i.e. systems which are almost invariant with respect to a vector field. Almost invariant meaning that Lagrangian of the system changes by complete derivative when differentiated by the tangent lift of a vector field as discussed e.g. in \cite{LCV}.




\begin{thebibliography}{2}

\bibitem{A} L.~Adamec,  {\em A Route to Routh — The Classical Setting},  Journal of Nonlinear Mathematical Physics, 18:1, (2011) 87--107;

\bibitem{AKN} V.I.~Arnold, V.V.~Kozlov, A.I.~Neishtadt, {\em Mathematical aspects of classical and celestial mechanics}, Springer-Verlag, Berlin, 1997;

\bibitem{Ba} S.~L.~Ba\.za\'nski, {\em The Jacobi variational principle revisited} in Classical and quantum integrability (Warsaw, 2001), Banach Center Publ. {\bf 59} (2003), 99–-111;

\bibitem{B} S.~Benenti, {\em Hamiltonian Structures and Generating Families}, Universitext, Springer, (2011);

\bibitem{CM} M.~Crampin, T.~Mestdag, {\em Routh’s procedure for non-Abelian symmetry groups}, J. Math. Phys. {\bf 49} (2008) 032901 (28p);

\bibitem{D} J.-P.~Dufour, {\em Introduction aux tissus},  S\'{e}minaire GETODIM (1991), pp. 55–76 (preprint),

\bibitem{GG} K.~Grabowska, J.~Grabowski, {\em Variational calculus with constraints on general algebroids},  J. Phys. A: Math. Theor. {\bf 41} (2008), 175204;

\bibitem{GGU1} K.~Grabowska, J.~Grabowski, P.~Urba\'nski, {\em AV-differential geometry: Poisson and Jacobi structures}, J. Geom. Phys. {\bf 52} (2004), 398--446;

\bibitem{GGU2} K.~Grabowska, J.~Grabowski, P.~Urba\'nski, {\em AV-differential geometry: Euler-Lagrange equations}, J. Geom. Phys. {\bf 57} (2007), 1984--1998;

\bibitem{GGU3} K.~Grabowska, J.~Grabowski, P.~Urba\'nski, {\em Geometrical Mechanics on algebroids}, Int. J. Geom. Meth. Mod. Phys., {\bf 3} (2006), 559--575;

\bibitem{GV} K.~Grabowska, L.~Vitagliano, {\em Tulczyjew triples in higher derivative field theory}, J. Geom. Mech. {\bf 7} (2015) 1--33;

\bibitem{GR} J.~Grabowski, M.~Rotkiewicz, {\em Higher vector bundles and multi-graded symplectic manifolds}, J. Geom. Phys. {\bf 59} (2009) 1285--1305;

\bibitem{GRU} J.~Grabowski, M.~Rotkiewicz P. Urbanski, {\em Double Affine Bundles}, J. Geom. Phys. {\bf 60} (2010)  581--598;

\bibitem{GU} J.~Grabowski, P.~Urbanski, {\em  Tangent lifts of Poisson and related structures}, J. Phys. A: Math. Gen. {\bf 28} (1995) 6743;

\bibitem{GU2} J.~Grabowski, P.~Urbanski, {\em Algebroids – general differential calculi on vector bundles}, J. Geom. Phys (1999) {\bf 31}, no. 2-3, 111--141;

\bibitem{LM} P.~Liebermann, Ch.~M.~Marle, {\em Symplectic geometry and analytical mechanics}, Reidel Publishing Company, Dordrecht, (1987);

\bibitem{LGA} B.~Langerock, E.~Garc\'ia-Tora\~no Andr\'es, F.~Cantrijn, {\em Routh reduction and the class of magnetic Lagrangian systems}, J. Math. Phys. {\bf 53} (2012) 062902;

\bibitem{LCV} B.~Langerock, F.~Cantrijn, J.~Vankerschaver, {\em Routhian reduction for quasi-invariant Lagrangians}, J. Math. Phys. {\bf 51} (2), (2010) 022902;

\bibitem{KU}  K.~Konieczna and P.~Urba\'nski, {\em  Double vector bundles and duality}, Arch. Math. (Brno) {\bf 35}, (1999), 59--95;

\bibitem{M} J.E.~Marsden, T.S. Ratiu, J. Scheurle, {\em Reduction theory and the Lagrange-Routh equations}, Journal of Mathematical Physics, {\bf 41} no. 6 (2000), 3379--3429;

\bibitem{Me} T.~Mestdag, {\em Finsler geodesics of Lagrangian systems through Routh reduction}, Mediterranean Journal of Mathematics {\bf 13} (2016), 825--839;

\bibitem{P} J.~Pradines, {\em  Fibr\'{e}s vectoriels doubles et calcul des jets non-holonomes}, Notes polycopi\'{e}s Amiens, (1974) (in French);

\bibitem{R} E.~J.~Routh, {\em  Stability of a Given State of Motion}, Halsted Press, New York, (1877);

\bibitem{T1} W.M.~Tulczyjew, {\em Les sous-vari\'et\'es lagrangiennes et la dynamique hamiltonienne}, C. R. Acad. Sc. Paris, {\bf 283} (1976), 15--18;

\bibitem{T2} W.M.~Tulczyjew, {\em Les sous-vari\'et\'es lagrangiennes et la dynamique lagrangienne}, C. R. Acad. Sc. Paris, {\bf 283} (1976) 675--678;

\bibitem{T3} W.M.~Tulczyjew, {\em  The Legendre transformation}, Ann. Inst. Henri Poincar\'e, {\bf 27} (1977), 101--114;

\bibitem{T4} W.M.~Tulczyjew, {\em Geometric Formulation of Physical Theories}, Bibliopolis, Naples (1989);

\bibitem{TU} W.M.~Tulczyjew, P.~Urba\'nski {\em An affine framework for the dynamics of char\-ged particles}, Atti Accad. Sci. Torino, Suppl. n. 2  {\bf 126} (1992), 257--265;

\bibitem{TU2} W.M.~Tulczyjew, P.~Urba\'nski, {\em A slow and careful Legendre transformation for singular Lagrangians}, The Infeld Centennial Meeting (Warsaw, 1998), Acta Phys.~Polon.~B {\bf 30} (1999) 2909--2978;

\bibitem{TUZ} W.M.~Tulczyjew, P.~Urba\'nski, S. Zakrzewski, {\em A pseudocategory of principal bundles}, Atti Accad. Sci. Torino, {\bf 122}, (1988), 66--71;

\bibitem{U} P.~Urba\'nski, {\em An affine framework for analytical mechanics}, Classical and quantum integrability (Warsaw, 2001), Banach Center Publ., {\bf 59}, Polish Acad. Sci., Warsaw, (2003), 257--279;

\bibitem{U2} P.~Urba\'nski, {\em Double vector bundles in classical mechanics}, Rend. Sem. Mat. Univ. Poi. Torino, {\bf 53}, no. 3 (1996);



\end{thebibliography}
\end{document}